\newtheorem{thm}{Theorem}[section]
\newtheorem{cor}[thm]{Corollary}
\newtheorem{pro}[thm]{Proposition}
\newtheorem{ex}[thm]{Example}
\newtheorem{rmk}[thm]{Remark}
\newtheorem{defi}[thm]{Definition}
\newcommand {\emptycomment}[1]{}
\newcommand{\lon }{\,\rightarrow\,}
\newcommand{\be }{\begin{equation}}
\newcommand{\ee }{\end{equation}}
\newcommand{\g}{\mathfrak g}
\newcommand{\h}{\mathfrak h}
\newcommand{\huaG}{\mathcal{G}}
\newcommand{\huaO}{{\mathcal{O}}}
\newcommand{\InnDer}{\mathrm{InnDer}}
\newcommand{\Orb}{\mathrm{Orb}}
\newcommand{\dM}{\mathrm{d}}
\newcommand{\NR}{\mathrm{NR}}
\newcommand{\Courant}[1]{\left\llbracket  #1\right\rrbracket }
\newcommand{\Id}{{\rm{Id}}}
\newcommand{\br}[1]{   [ \cdot,    \cdot  ]   }
\newcommand{\CE}{\mathsf{CE}}
\newcommand{\Hom}{\mathrm{Hom}}
\newcommand{\InnAut}{\mathrm{InnAut}}
\newcommand{\gl}{\mathfrak {gl}}
\newcommand{\ad}{\mathrm{ad}}
\begin{document}

\title[Deformations of modified $r$-matrices]{Deformations of modified $r$-matrices and cohomologies of related algebraic
 structures}

\author{Jun Jiang}
\address{Department of Mathematics, Jilin University, Changchun 130012, Jilin, China}
\email{junjiang@jlu.edu.cn}

\author{Yunhe Sheng}
\address{Department of Mathematics, Jilin University, Changchun 130012, Jilin, China}
\email{shengyh@jlu.edu.cn}


\begin{abstract}
Modified $r$-matrices are solutions of the modified classical Yang-Baxter equation, introduced by Semenov-Tian-Shansky, and play important roles in mathematical physics. In this paper, first we introduce a cohomology theory for modified $r$-matrices. Then we study three kinds of deformations of modified $r$-matrices using the established cohomology theory, including algebraic deformations, geometric deformations and linear deformations. We give the differential graded Lie algebra that governs algebraic deformations of modified $r$-matrices.  For geometric deformations, we prove the rigidity theorem and study when is a neighborhood of a modified $r$-matrix smooth in
the space of all modified $r$-matrix structures. In the study of trivial linear deformations, we introduce the notion of a Nijenhuis element for a modified $r$-matrix. Finally, applications are given to study deformations of complement of the diagonal Lie algebra and compatible Poisson structures.
\end{abstract}

\renewcommand{\thefootnote}{}
\footnotetext{2020 Mathematics Subject Classification. 17B37, 17B38, 17B56}

\keywords{modified classical Yang-Baxter equation, modified $r$-matrix, cohomology, deformation}

\maketitle

\tableofcontents

\allowdisplaybreaks


\section{Introduction}

In the seminal work \cite{STS}, Semenov-Tian-Shansky showed that solutions of the
modified classical Yang-Baxter equation, which we call modified $r$-matrices in this paper, play an important role in
studying solutions of Lax equations \cite{RS1, STS,STS2}. Furthermore,  modified $r$-matrices are intimately related to particular factorization problems in
the corresponding Lie algebras and Lie groups. This factorization problem was considered by Reshetikhin and  Semenov-Tian-Shansky in the framework of the enveloping algebra of a Lie algebra with a modified $r$-matrix to study quantum integrable systems \cite{RS88}. Any modified $r$-matrix induces a post-Lie algebra \cite{BGN}, and a factorization theorem for group-like elements of the completion of the Lie enveloping algebra
of a post-Lie algebra was established by   Ebrahimi-Fard,  Mencattini and   Munthe-Kaas in \cite{FKK,FM}. Recently, the global factorization theorem for a Rota-Baxter Lie group was given in \cite{GLS}. Moreover, modified $r$-matrices are also useful for the construction of flat metrics and Frobenius manifolds \cite{Sza}, and compatible Poisson structures \cite{Li}. Note that in the associative algebra context, such objects are called modified Rota-Baxter algebras by Zhang,   Gao and  Guo \cite{ZGG1,ZGG2}.

A classical approach to study a mathematical structure is to associate to it invariants. Among these, cohomology theories occupy a  central position as they enable for example to control deformations or extension problems.
Note that the cohomology theory for a skew-symmetric classical $r$-matrix was studied  in \cite{TBGS} under the general framework of relative Rota-Baxter operators (also called $\huaO$-operators \cite{Ku}). The first purpose of this paper is to study the cohomology theory for a modified $r$-matrix. In  \cite{STS}, Semenov-Tian-Shansky showed that a modified $r$-matrix $R:\g\to\g$ on a Lie algebra $(\g,[\cdot,\cdot]_\g)$ induces a new Lie algebra $\g_R$ in which the Lie bracket $[\cdot,\cdot]_R$ is given by
$$
[x,y]_R=[R(x),y]_\g+[x,R(y)]_\g,\quad \forall x, y\in \g.
$$
In \cite{Bor}, Bordemann showed that the induced Lie algebra $\g_R$ represents on $\g$. We use the corresponding Chevalley-Eilenberg cohomology \cite{Ch-Ei} of the Lie algebra $\g_R$ with coefficients in $\g$ to define the cohomology of the modified $r$-matrix $R$. It is well known that there is a one-to-one correspondence between modified $r$-matrix $R$ and Rota-Baxter operator $B$ of weight 1 via the relation $R=\Id+2B$. The cohomology theory of the latter was given in \cite{JSZ} and the Van Est type theorem was established. We also show that the cohomology of the modified $r$-matrix $R=\Id+2B$ and the cohomology of the Rota-Baxter operator $B$ are isomorphic.

The concept of a formal deformation of an algebraic structure began with the seminal
work of Gerstenhaber~\cite{Ge0,Ge} for associative
algebras. Nijenhuis and Richardson   extended this study to Lie algebras
~\cite{NR,NR2}.  There is a well known slogan, often attributed to Deligne, Drinfeld and Kontsevich:
every reasonable deformation theory is controlled by a differential graded Lie
algebra, determined up to quasi-isomorphism. This slogan has been made into a rigorous
theorem by Lurie and Pridham \cite{Lu,Pr}. 
It is also meaningful to deform {\em maps} compatible with given algebraic structures. Recently, the deformation theory of morphisms was   developed in \cite{Borisov,Fregier-Zambon-1,Fregier-Zambon-2},  the deformation theories of $\huaO$-operators on Lie algebras and associative algebras were developed in \cite{TBGS,Das}. The second purpose of the paper is to study deformation theories of modified $r$-matrices. We study three kinds of deformations of a modified $r$-matrix $R$:
\begin{itemize}
  \item (algebraic deformations) first we consider algebraic deformation $R+R'$ for certain linear map $R'$, and show that this kind of deformations are governed by a differential graded Lie algebra. This fulfill the general slogan for the deformation theory proposed  by Deligne, Drinfeld and Kontsevich;

  \item (geometric deformations) then we consider smooth geometric deformation $R_t$ such that $R_0=R$ using the approach developed by Crainic,   Schatz and  Struchiner in \cite{CSS}. We show that the tangent space  $T_R\Orb_R$ of the orbit $\Orb_R$ is the space of $2$-coboundaries $B^{2}(R)$. Consequently, the condition $H^{2}(R)=0$ will imply certain rigidity theorem, and the condition $H^{3}(R)=0$ will imply the space of modified $r$-matrices on the Lie algebra $\g$ is a manifold in a neighborhood of $R$. We also give the necessary and sufficient condition on a 2-cocycle giving a geometric deformation using the Kuranishi map;

      \item (linear deformation) next we study linear deformation $R+t\hat{R}$. In particular, trivial linear deformations leads to the concept of Nijenhuis elements for a modified $r$-matrix. If $x\in\g$ is a Nijenhuis element, then $\ad_x$ is a Nijenhuis operator on the Lie algebra $\g_R$.
\end{itemize}
Note that certain particular deformation of classical $r$-matrices are considered in \cite{SzaB} in the study of integrable infinite-dimensional
systems.

The papers is organized as follows. In Section \ref{sec:coh}, we define the cohomology of a modified $r$-matrix $R$ using the Chevalley-Eilenberg cohomology   of the Lie algebra $\g_R$ with coefficients in $\g$. In Section \ref{sec:alg}, we construct a differential graded Lie algebra that governs algebraic deformations of a modified $r$-matrix. In Section \ref{sec:geo}, we study geometric deformations of a modified $r$-matrix. In Section \ref{sec:lin}, we study linear deformations of a modified $r$-matrix. In Section \ref{sec:app}, we  study   deformations of complement of the diagonal Lie algebra and compatible Poisson structures as applications.


\section{Cohomologies of modified $r$-matrices}\label{sec:coh}

In this section, we establish  the cohomology theory of a modified $r$-matrix $R$ using the Chevalley-Eilenberg cohomology  of the Lie algebra $\g_R$ with coefficients in $\g$.

\begin{defi}\rm(\cite{STS})
Let $(\g, [\cdot,\cdot]_\g)$ be a Lie algebra. A linear map $R:\g\lon\g$ is called a {\bf modified $r$-matrix} if it is a solution of the following {\bf modified classical Yang-Baxter equation}:
\begin{equation}\label{assmd}
[R(x), R(y)]_\g=R([R(x),y]_\g+[x,R(y)]_\g)-[x,y]_\g, \quad \forall x,y\in\g.
\end{equation}
\end{defi}

\begin{defi}
Let $R$ and $R'$ be modified $r$-matrices on a Lie algebra $(\g,[\cdot,\cdot]_\g)$. A {\bf homomorphism} from $R$ to $R'$ is a Lie algebra homomorphism $\varphi:\g\lon\g$ such that
\begin{equation*}
\varphi\circ R=R'\circ\varphi.
\end{equation*}
\end{defi}

\begin{rmk}
  The notion of a modified Rota-Baxter operator of weight $-1$ on an associative algebra was introduced in \cite{EF}. More precisely, it is a linear map $P: A\lon A$ on an associative algebra $(A, \cdot_A)$ satisfying
\begin{equation*}
P(u)\cdot_{A}P(v)=P(P(u)\cdot_{A}v+u\cdot_{A}P(v))-u\cdot_{A}v, \quad \forall u,v\in A.
\end{equation*}
It is straightforward to see that if a linear map $P: A\lon A$ is a modified Rota-Baxter operator of weight $-1$ on an associative algebra $(A, \cdot_A)$,  then $P$ is a modified $r$-matrix on the Lie algebra $(A, [\cdot,\cdot]_A)$, where $[\cdot,\cdot]_A$ is the commutator Lie bracket.

\end{rmk}

 \begin{rmk}\label{thmex1}
 Let $R:\g\lon\g$  be a linear map on a Lie algebra $(\g, [\cdot,\cdot]_\g)$. Under the condition $R^2=\Id$, the following structures are equivalent:
   \begin{itemize}
     \item $R$ is a  modified $r$-matrix;

     \item $R$ is a Nijenhuis operator;

     \item $R$ is a product structure;

     \item There is a vector space direct sum decomposition $\g=\g_1\oplus\g_2$ of $\g$ into subalgebras $\g_1$ and $\g_2$   such that $R$ is given by
      \begin{equation*}
      R(x,u)= (x,-u),\quad \forall x\in\g_1, u\in\g_2.
      \end{equation*}
   \end{itemize}
 \end{rmk}

 Let $(\g, [\cdot,\cdot]_\g)$ be a Lie algebra and $R$ be a modified $r$-matrix.   Semenov-Tian-Shansky  showed that $(\g, [\cdot,\cdot]_R)$ is a Lie algebra which plays important roles in the study of integrable systems \cite{STS}, where
\begin{equation}\label{defi-lie}
[x,y]_R=[R(x),y]_\g+[x,R(y)]_\g,\quad \forall x,y\in\g.
\end{equation}
 Recall that a matched pair of Lie algebras consists of Lie algebras $(\g,[\cdot,\cdot]_\g)$, $(\h,[\cdot,\cdot]_\h)$, a representation $\rho:\g\to \gl(\h)$ of $\g$ on $\h$ and a representation $\varrho:\h\to \gl(\g)$ of $\h$ on $\g$, such that some compatibility conditions are satisfied.   Bordemann further showed that the induced Lie algebra $\g_R$ represents on $\g$ which leads to a matched pair of Lie algebras $((\g,[\cdot,\cdot]_\g),(\g,[\cdot,\cdot]_R)) $ \cite{Bor}. Here we give a direct proof to be self-contained.

\begin{pro}
Let $R$ be a modified $r$-matrix on a Lie algebra $(\g, [\cdot, \cdot]_\g)$. Define a linear map $\rho:\g\lon\gl(\g)$ by
\begin{equation}\label{defi-rep}
\rho(x)y=[R(x),y]_\g-R([x,y]_\g), \quad \forall x,y\in\g.
\end{equation}
Then $\rho$ is a representation of the Lie algebra $(\g, [\cdot,\cdot]_R)$ on the vector space $\g$.
\end{pro}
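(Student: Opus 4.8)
The plan is to verify directly that $\rho$ defined by \eqref{defi-rep} satisfies the representation condition
$$
\rho([x,y]_R) = \rho(x)\rho(y) - \rho(y)\rho(x), \qquad \forall x,y\in\g,
$$
where $[x,y]_R = [R(x),y]_\g + [x,R(y)]_\g$. This is a purely computational identity, so the strategy is to expand both sides using the definition of $\rho$, the definition of $[\cdot,\cdot]_R$, the Jacobi identity in $(\g,[\cdot,\cdot]_\g)$, and — crucially — the modified classical Yang-Baxter equation \eqref{assmd}.

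First I would compute the left-hand side: applying $\rho$ to $[x,y]_R$ and evaluating on $z\in\g$,
$$
\rho([x,y]_R)z = [R([R(x),y]_\g + [x,R(y)]_\g),\, z]_\g - R([[R(x),y]_\g + [x,R(y)]_\g,\, z]_\g).
$$
Here I would substitute the modified Yang-Baxter equation \eqref{assmd} to rewrite $R([R(x),y]_\g + [x,R(y)]_\g)$ as $[R(x),R(y)]_\g + [x,y]_\g$, so the first term becomes $[[R(x),R(y)]_\g,z]_\g + [[x,y]_\g,z]_\g$. Next I would expand the right-hand side $\rho(x)\rho(y)z - \rho(y)\rho(x)z$ by iterating \eqref{defi-rep}:
$$
\rho(x)\rho(y)z = [R(x), [R(y),z]_\g - R([y,z]_\g)]_\g - R\big([x,\,[R(y),z]_\g - R([y,z]_\g)]_\g\big),
$$
and similarly with $x,y$ swapped. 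Expanding everything, I expect a collection of terms that are manifestly antisymmetric plus terms involving $R$ applied to brackets, and terms of the form $R(R(\cdot))$ hidden inside. The terms $[R(x),[R(y),z]_\g]_\g - [R(y),[R(x),z]_\g]_\g$ combine via Jacobi to $[[R(x),R(y)]_\g,z]_\g$, matching one piece of the left side. The remaining terms will contain expressions like $R([R(x),[y,z]_\g]_\g)$, $R([x,[R(y),z]_\g]_\g)$, $R(R([x,[y,z]_\g]_\g))$ and their $x\leftrightarrow y$ counterparts; here I would apply the Jacobi identity inside the argument of the outer $R$ to regroup, and then apply \eqref{assmd} once more — this time to the inner occurrences — to collapse the $R\circ R$ terms. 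The leftover $[[x,y]_\g,z]_\g$ on the left side should be produced precisely by the $-[x,y]_\g$ term in \eqref{assmd} when it is invoked inside these regroupings.

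**The main obstacle** is bookkeeping: organizing the dozen-or-so terms so that the modified Yang-Baxter equation can be applied in exactly the right places, and confirming that all the genuinely $R$-nonlinear leftovers cancel rather than leaving a residual obstruction. A clean way to manage this is to first observe that $\rho(x) = \ad_{R(x)} - R\circ\ad_x$ as operators on $\g$ (where $\ad$ is the adjoint of $[\cdot,\cdot]_\g$), so that the bracket $[\rho(x),\rho(y)]$ decomposes into four pieces: $[\ad_{R(x)},\ad_{R(y)}]$, $-[\ad_{R(x)}, R\ad_y]$, $-[R\ad_x, \ad_{R(y)}]$, and $[R\ad_x, R\ad_y]$. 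The first is $\ad_{[R(x),R(y)]_\g}$ by the Jacobi identity; the cross terms and last term can be reorganized using the identity $\ad_{R(z)}\circ R - R\circ\ad_{R(z)} = R\circ(\ad_z\circ R - \ad_{R(z)})$... — in fact the cleanest route is probably to note that \eqref{assmd} is equivalent to $\rho(x)\circ R - R\circ\rho(x) = -\ad_x$, i.e. $[R(x),R(y)]_\g - R([R(x),y]_\g+[x,R(y)]_\g) = -[x,y]_\g$ says exactly that $\rho$ "twisted by $R$" shifts by $-\ad$. Using this covariance relation repeatedly reduces the whole verification to the Jacobi identity for $\ad$. I would write the proof in that operator form to keep it short, falling back to the elementwise expansion above only if the operator identities become opaque.
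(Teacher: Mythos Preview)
Your direct computational plan --- expand $\rho([x,y]_R)z$ using the modified Yang-Baxter equation \eqref{assmd} to replace $R([R(x),y]_\g+[x,R(y)]_\g)$ by $[R(x),R(y)]_\g+[x,y]_\g$, expand $\rho(x)\rho(y)z-\rho(y)\rho(x)z$ termwise, combine $[R(x),[R(y),z]_\g]_\g-[R(y),[R(x),z]_\g]_\g$ via Jacobi, and then apply \eqref{assmd} to the remaining cross-terms --- is correct and is exactly the route the paper takes. No $R^2$ terms actually appear in this expansion: the ``inner $R$'' terms you anticipate are of the form $R([x,R([y,z]_\g)]_\g)$, and these are eliminated by applying \eqref{assmd} to the pair $(x,[y,z]_\g)$ (and its $x\leftrightarrow y$ counterpart), not by a second invocation on something nested.

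However, your proposed ``cleanest route'' at the end contains a genuine error. The operator identity you assert, $\rho(x)\circ R - R\circ\rho(x) = -\ad_x$, is \emph{not} equivalent to \eqref{assmd}. Computing directly,
\[
(\rho(x)\circ R - R\circ\rho(x))(y) = [R(x),R(y)]_\g - R([x,R(y)]_\g) - R([R(x),y]_\g) + R^2([x,y]_\g),
\]
and substituting \eqref{assmd} gives $(R^2-\Id)([x,y]_\g)$, not $-[x,y]_\g$. So the correct covariance relation is $\rho(x)\circ R - R\circ\rho(x) = (R^2-\Id)\circ\ad_x$, which drags in $R^2$ --- an operator not controlled by the modified Yang-Baxter equation alone. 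If you try to build the proof on this identity as you say you would, the argument will not close. Stick with the elementwise expansion; that is already what the paper does, and it goes through cleanly in a few lines.
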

\begin{proof}
For all $x, y, z\in\g$, by \eqref{assmd} and \eqref{defi-rep}, we have
\begin{eqnarray*}
&&[\rho(x), \rho(y)]z\\&=&\rho(x)\rho(y)z-\rho(y)\rho(x)z\\
&=&\rho(x)([R(y),z]_\g-R([y,z]_\g))-\rho(y)([R(x),z]_\g-R([x,z]_\g))\\
&=&[R(x),[R(y),z]_\g]_\g-[R(x),R([y,z]_\g)]_\g-R([x,[R(y),z]_\g]_\g)+R([x,R([y,z]_\g)]_\g)\\
&&-[R(y),[R(x),z]_\g]_\g+[R(y),R([x,z]_\g)]_\g+R([y,[R(x),z]_\g]_\g)-R([y,R([x,z]_\g)]_\g)\\
&=&[[R(x),R(y)]_\g,z]_\g-R([R(x),[y,z]_\g]_\g)-R([x,[R(y),z]_\g]_\g)+[x, [y,z]_\g]_\g\\
&&+R([R(y),[x,z]_\g]_\g)-[y,[x,z]_\g]_\g+R([y,[R(x),z]_\g]_\g)\\
&=&[[R(x),R(y)]_\g,z]_\g+[[x,y]_\g,z]_\g-R([[R(x),y]_\g,z]_\g)-R([[x,R(y)]_\g,z]_\g),
\end{eqnarray*}
and
\begin{eqnarray*}
&&\rho([x,y]_R)z\\&=&\rho([R(x),y]_\g+[x,R(y)]_\g)z\\
&=&[R([R(x),y]_\g+[x,R(y)]_\g),z]_\g-R([[R(x),y]_\g+[x,R(y)]_\g,z]_\g)\\
&=&[[R(x),R(y)]_\g,z]_\g+[[x,y]_\g,z]_\g-R([[R(x),y]_\g,z]_\g)-R([[x,R(y)]_\g,z]_\g).
\end{eqnarray*}
Thus we have $\rho([x,y]_R)=[\rho(x),\rho(y)]$, which means that $\rho$ is a representation of $(\g, [\cdot,\cdot]_R)$ on the vector space $\g$.
\end{proof}

Let $\dM_{\CE}^R: \Hom(\wedge^{k}\g,\g)\lon\Hom(\wedge^{k+1}\g,\g)$ be the corresponding Chevalley-Eilenberg coboundary operator of the Lie algebra $(\g, [\cdot,\cdot]_R)$ with coefficients in the representation $(\g, \rho)$. More precisely, for all $f\in\Hom(\wedge^{k}\g,\g)$ and $x_1,\cdots,x_{k+1}\in\g$, we have
\begin{eqnarray}
\label{defor-cob}&&\dM_{\CE}^{R}f(x_1,\cdots,x_{k+1})\\
\nonumber&=&\sum_{i=1}^{k+1}(-1)^{i+1}\rho(x_i)f(x_1,\cdots,\hat{x_i},\cdots,x_{k+1})\\
\nonumber&&+\sum_{i<j}(-1)^{i+j}f([x_i,x_j]_R,x_1,\cdots,\hat{x_i},\cdots,\hat{x_j},\cdots,x_{k+1})\\
\nonumber&=&\sum_{i=1}^{k+1}(-1)^{i+1}[R(x_i), f(x_1,\cdots,\hat{x_i},\cdots,x_{k+1})]_\g\\
&&-\sum_{i=1}^{k+1}(-1)^{i+1}R([x_i,f(x_1,\cdots,\hat{x_i},\cdots,x_{k+1})]_\g)\\
\nonumber&&+\sum_{i<j}(-1)^{i+j}f([R(x_i),x_j]_\g+[x_i,R(x_j)]_\g,x_1,\cdots,\hat{x_i},\cdots,\hat{x_j},\cdots,x_{k+1}).
\end{eqnarray}

Now, we define the cohomology of a modified $r$-matrix $R:\g\lon\g$. Define the space of $0$-cochains $C^{0}(R)$ to be $0$ and define the space of $1$-cochains $C^{1}(R)$ to be $\g$. For $n\geq 2$, define the space of $n$-cochains $C^{n}(R)$ by $C^{n}(R)=\Hom(\wedge^{n-1}\g,\g)$.

\begin{defi}
Let $(\g, [\cdot,\cdot]_\g)$ be a Lie algebra and $R$ be a modified r-matrix. The cohomology of  the cochain complex   $(\oplus_{i=0}^{+\infty}C^{i}(R), \dM_{\CE}^{R})$ is defined to be the {\bf cohomology for the modified $r$-matrix} $R$.
\end{defi}

Denote the set of $n$-cocycles by $Z^{n}(R)$, the set of $n$-coboundaries by $B^{n}(R)$ and the $n$-th cohomology group by
\begin{equation*}
H^{n}(R)=Z^{n}(R)/B^{n}(R), \quad n\geq 0.
\end{equation*}

It is obvious that $x\in\g$ is closed if and only if
\begin{equation*}
\ad_x\circ R=R\circ\ad_x,
\end{equation*}
and $f\in\Hom(\g,\g)$ is closed if and only if
\begin{eqnarray}\label{eqclosed}
[R(x),f(y)]_\g-R([x,f(y)]_\g)-[R(y),f(x)]_\g+R([y,f(x)]_\g)
=f([R(x),y]_\g+[x,R(y)]_\g),
\end{eqnarray}
for all $x,y\in\g$.

At the end of this section, we recall the cohomology theory of Rota-Baxter operators given in \cite{JSZ}, and establish its relation with the cohomology theory of  modified $r$-matrices.
\begin{defi}
Let $(\g, [\cdot,\cdot]_\g)$ be a Lie algebra. A linear map $B:\g\lon\g$ is called a {\bf Rota-Baxter operator of weight $\lambda$} if
\begin{equation*}
[B(x), B(y)]_\g=B([B(x),y]_\g+[x,B(y)]_\g+\lambda[x,y]_\g), \quad \forall x,y\in\g.
\end{equation*}
\end{defi}

The following result is well known.

\begin{pro}\label{corRM}
Let $\g$ be a Lie algebra and $B\in\gl(\g)$. The linear map $\Id+2B$ is a modified $r$-matrix on $\g$ if and only if $B$ is a Rota-Baxter operator of weight $1$ on $\g$.
\end{pro}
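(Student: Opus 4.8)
The plan is to verify the defining identity directly by substitution. Set $R = \Id + 2B$ and expand the modified classical Yang-Baxter equation \eqref{assmd} for $R$ in terms of $B$, using bilinearity of the bracket. First I would compute the left-hand side:
\[
[R(x),R(y)]_\g = [x+2B(x), y+2B(y)]_\g = [x,y]_\g + 2[B(x),y]_\g + 2[x,B(y)]_\g + 4[B(x),B(y)]_\g.
\]
Next I would expand the right-hand side $R([R(x),y]_\g + [x,R(y)]_\g) - [x,y]_\g$. The inner term is
\[
[R(x),y]_\g + [x,R(y)]_\g = 2[x,y]_\g + 2[B(x),y]_\g + 2[x,B(y)]_\g,
\]
so applying $R = \Id + 2B$ and subtracting $[x,y]_\g$ gives
\[
[x,y]_\g + 2[B(x),y]_\g + 2[x,B(y)]_\g + 4B\big([x,y]_\g + [B(x),y]_\g + [x,B(y)]_\g\big).
\]

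Then I would equate the two sides. The terms $[x,y]_\g$, $2[B(x),y]_\g$, $2[x,B(y)]_\g$ cancel on both sides, leaving exactly
\[
4[B(x),B(y)]_\g = 4B\big([B(x),y]_\g + [x,B(y)]_\g + [x,y]_\g\big),
\]
which after dividing by $4$ is precisely the Rota-Baxter identity of weight $1$. This equivalence runs in both directions since each step is reversible, so $\Id + 2B$ satisfies \eqref{assmd} for all $x,y$ if and only if $B$ satisfies the weight-$1$ Rota-Baxter identity for all $x,y$.

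There is essentially no obstacle here; the statement is a routine bookkeeping computation, and the only mild care needed is tracking the coefficients $2$ and $4$ correctly and confirming that all lower-order terms cancel. One could alternatively phrase it more conceptually: a linear map $N$ on $\g$ with $N = \Id + 2B$ turns the Rota-Baxter "split" bracket $[x,y]_B := [B(x),y]_\g + [x,B(y)]_\g + [x,y]_\g$ into data equivalent to the modified bracket, but for a self-contained proof the direct substitution above is cleanest and shortest.
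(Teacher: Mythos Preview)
Your proof is correct; the computation is carried out cleanly and the equivalence indeed reduces to the weight-$1$ Rota-Baxter identity after cancelling the common terms and dividing by $4$. The paper itself does not supply a proof, remarking only that the result is well known, so your direct substitution is exactly the standard argument one would give here.
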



Let $B$ be a Rota-Baxter operator of weight $1$ on a Lie algebra $\g$. Consider the cochain complex $(\oplus_{k=1}^{+\infty}C^{k}(B), \dM_{\CE}^{B})$, where $C^{1}(B)=\g$ and $C^{k}(B)=\Hom(\wedge^{k-1}\g,\g)$ for $k\geq 2$, and $\dM_{\CE}^{B}$ is defined by
 \begin{eqnarray*}
&&\dM_{\CE}^Bf(u_{1},\cdots, u_{k+1})\\
&=&\sum_{i=1}^{k+1}(-1)^{i+1}B([f(u_1, \cdots, \hat{u_i}, \cdots, u_{k+1}),u_i]_{\g})\\&&+\sum_{i=1}^{k+1}(-1)^{i+1}[B(u_i), f(u_1, \cdots, \hat{u_i}, \cdots, u_{k+1})]_\g\\
&&+\sum_{i<j}(-1)^{i+j}f([B(u_i),u_j]_{\g}-[B(u_j),u_i]_{\g}+[u_i,u_j]_{\g}, u_1, \cdots, \hat{u_i}, \cdots, \hat{u_j}, \cdots, u_{k+1}),
\end{eqnarray*}
where $f\in C^{k+1}(B)$ and $u_i\in\g, 1\leq i\leq k+1.$

It was proved in \cite{JSZ} that $(\dM_{\CE}^{B})^2=0$. The cohomology of the cochain complex $(\oplus_{k=1}^{+\infty}C^{k}(B), \dM_{\CE}^{B})$ is defined to be the cohomology of the Rota-Baxter operator $B$.

 \begin{thm}
With the above notations, we have
$$
\dM_{\CE}^{R}=2\dM_{\CE}^{B}.
$$
Consequently, for $k\geq 1$, the $k$-th cohomology group $H^{k}(B)$ of a Rota-Baxter operator $B$ is isomorphic with the $k$-th cohomology group $H^{k}(R)$ of the modified $r$-matrix $R=\Id+2B$.
 \end{thm}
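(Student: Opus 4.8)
The plan is to prove the identity $\dM_{\CE}^{R}=2\dM_{\CE}^{B}$ directly on cochains, where $R=\Id+2B$, and then deduce the cohomology isomorphism as an immediate formal consequence. First I would fix the degree: take $f\in C^{k+1}(R)=C^{k+1}(B)=\Hom(\wedge^k\g,\g)$ (the cases $k=0$, where cochains are elements of $\g$, being handled by the same formula with the convention that the sum over $i<j$ is empty) and $u_1,\dots,u_{k+1}\in\g$. Then I would substitute $R(u_i)=u_i+2B(u_i)$ into each of the three summands of \eqref{defor-cob} and expand by bilinearity of the bracket.

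The key computation is a term-by-term bookkeeping. In the first summand $\sum_i(-1)^{i+1}[R(u_i),f(\dots)]_\g$, replacing $R(u_i)$ by $u_i+2B(u_i)$ produces $\sum_i(-1)^{i+1}[u_i,f(\dots)]_\g + 2\sum_i(-1)^{i+1}[B(u_i),f(\dots)]_\g$; the second of these is exactly $2$ times the bracket term in $\dM_{\CE}^{B}$. In the second summand $-\sum_i(-1)^{i+1}R([u_i,f(\dots)]_\g)$, writing $R=\Id+2B$ gives $-\sum_i(-1)^{i+1}[u_i,f(\dots)]_\g - 2\sum_i(-1)^{i+1}B([u_i,f(\dots)]_\g)$; the last of these equals $2\sum_i(-1)^{i+1}B([f(\dots),u_i]_\g)$, matching $2$ times the $B$-term in $\dM_{\CE}^{B}$. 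Crucially, the two ``$\Id$'' contributions $\sum_i(-1)^{i+1}[u_i,f(\dots)]_\g$ from the first and second summands cancel each other exactly. Finally, in the third summand, $[R(u_i),u_j]_\g+[u_i,R(u_j)]_\g = [u_i,u_j]_\g+2[B(u_i),u_j]_\g+[u_i,u_j]_\g+2[u_i,B(u_j)]_\g = 2\big([B(u_i),u_j]_\g-[B(u_j),u_i]_\g+[u_i,u_j]_\g\big)$, and since $f$ is linear in that slot this is $2$ times the corresponding argument in $\dM_{\CE}^{B}$. Collecting the three pieces yields $\dM_{\CE}^{R}f=2\dM_{\CE}^{B}f$.

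For the cohomological consequence, the cochain groups $C^k(R)$ and $C^k(B)$ are literally the same vector spaces for all $k\geq 1$, and the identity $\dM_{\CE}^{R}=2\dM_{\CE}^{B}$ shows the two differentials differ only by the invertible scalar $2$ (we work over a field of characteristic $0$, or at least one in which $2$ is invertible, as is implicit throughout since $R=\Id+2B$ is used). Hence $Z^k(R)=Z^k(B)$ and $B^k(R)=B^k(B)$ as subspaces, so $H^k(R)=H^k(B)$ for every $k\geq 1$; in particular $(\dM_{\CE}^R)^2=0$ is re-derived for free from $(\dM_{\CE}^B)^2=0$, proved in \cite{JSZ}.

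I do not expect a genuine obstacle here; the only thing requiring care is the sign and index bookkeeping in the expansion — in particular making sure the ``identity part'' contributions from the first two summands of \eqref{defor-cob} really do cancel, and that the rewriting $-B([u_i,f(\dots)]_\g)=B([f(\dots),u_i]_\g)$ lines up the $B$-terms with the exact form of $\dM_{\CE}^{B}$ given above. Since the paper states $\dM_{\CE}^R$ in two equivalent forms in \eqref{defor-cob}, I would expand starting from the second (already-unfolded) form to minimize the risk of a sign slip.
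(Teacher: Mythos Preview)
Your proposal is correct and follows essentially the same approach as the paper: substitute $R=\Id+2B$ into the explicit formula \eqref{defor-cob}, observe that the ``identity'' contributions from the first two summands cancel, and match the remaining terms with $2\dM_{\CE}^{B}$. The paper wraps this same computation in the language of a chain map $\Phi_k=2^{k-2}\Id$, but since these are just scalar multiples of the identity, the underlying calculation and the deduction of the cohomology isomorphism are identical to yours.
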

\begin{proof}
 For $k\geq 1,$ define linear maps $\Phi_{k}: C^{k}(B)\lon C^{k}(R)$ by $\Phi_{k}=2^{k-2}\Id$.
Then the following diagram is commutative:
\[
\xymatrix{
0 \ar[r] & \g \ar[r]^{\dM_{\CE}^{B}\quad} \ar[d]_{\frac{1}{2}\Id} & \Hom(\g,\g)  \ar[r]^{\qquad \dM_{\CE}^{B}} \ar[d]_{\Id} & \cdots \ar[r]^{\dM_{\CE}^{B}\qquad} & \Hom(\wedge^{k}\g,\g) \ar[r]^{\quad\dM_{\CE}^{B}}  \ar[d]_{2^{k-1}\Id} & \cdots\\
0 \ar[r] & \g \ar[r]^{\dM_{\CE}^{R}\quad} & \Hom(\g,\g) \ar[r]^{\qquad\dM_{\CE}^{R}} & \cdots \ar[r]^{\dM_{\CE}^{R}\qquad} & \Hom(\wedge^{k}\g,\g) \ar[r]^{\quad\dM_{\CE}^{R}} & \cdots
.}
\]
In fact, for any $f\in\Hom(\wedge^{k}\g,\g), x_i\in\g, 1\leq i\leq k+1$, we have
\begin{eqnarray*}
&&\dM_{\CE}^{R}(\Phi_{k}f)(x_1,\cdots,x_{k+1})\\
&=&2^{k-1}\Big(\sum_{i=1}^{k+1}(-1)^{i+1}([x_i, f(x_1,\cdots,\hat{x_i},\cdots,x_{k+1})]_\g\\&&+2[B(x_i),f(x_1,\cdots,\hat{x_i},\cdots,x_{k+1})]_{\g})\\
&&-\sum_{i=1}^{k+1}(-1)^{i+1}[x_i,f(x_1,\cdots,\hat{x_i},\cdots,x_{k+1})]_\g\\&&-\sum_{i=1}^{k+1}(-1)^{i+1}2B([x_i,f(x_1,\cdots,\hat{x_i},\cdots,x_{k+1})]_\g)\\
&&+\sum_{i<j}(-1)^{i+j}2f([x_i,x_j]_{\g},x_1,\cdots,\hat{x_i},\cdots,\hat{x_j},\cdots,x_{k+1})\\
&&+\sum_{i<j}(-1)^{i+j}2f([B(x_i),x_j]_{\g}+[x_i,B(x_j)]_{\g},x_1,\cdots,\hat{x_i},\cdots,\hat{x_j},\cdots,x_{k+1})\Big)\\
&=&2^{k}\Big(\sum_{i=1}^{k+1}(-1)^{i+1}([B(x_i),f(x_1,\cdots,\hat{x_i},\cdots,x_{k+1})]_{\g}-B([x_i,f(x_1,\cdots,\hat{x_i},\cdots,x_{k+1})]_\g))\\
&&+\sum_{i<j}(-1)^{i+j}f([B(x_i),x_j]_{\g}+[x_i,B(x_j)]_{\g}+[x_i,x_j]_{\g},x_1,\cdots,\hat{x_i},\cdots,\hat{x_j},\cdots,x_{k+1})\Big)\\
&=&\Phi_{k+1}(\dM_{\CE}^{B}f)(x_1,\cdots, x_{k+1}),
\end{eqnarray*}
which implies that $\dM_{\CE}^{R}=2\dM_{\CE}^{B} $ and $H^{k}(B)\cong H^{k}(R), k\geq 1$.
\end{proof}

\begin{ex}\label{H1}{\rm
Consider the Lie algebra $\g=\mathfrak{sl}(n, \mathbb{R})$. It is well known that the Cartan subalgebra of  $\mathfrak{sl}(n, \mathbb{R})$ is $H=\text{span}\{E_{ii}-E_{i+1i+1}|1\leq i\leq n-1\}$. Denote the Borel subalgebra of $\mathfrak{sl}(n, \mathbb{R})$ by $B(\mathfrak{sl}(n, \mathbb{R}))$. It is well known that $B(\mathfrak{sl}(n, \mathbb{R}))=H\oplus\text{span}\{E_{ij}|i<j\}$. Thus $\mathfrak{sl}(n, \mathbb{R})=B(\mathfrak{sl}(n, \mathbb{R}))\oplus A$ as vector spaces, where $A=\text{span}\{E_{ij}|i>j\}$. Define a linear map $R: \mathfrak{sl}(n, \mathbb{R})\lon\mathfrak{sl}(n, \mathbb{R})$ by
\begin{equation*}
R(x+u)=x-u, \quad \forall x\in B(\mathfrak{sl}(n, \mathbb{R})), u\in A.
\end{equation*}
By Remark \ref{thmex1}, we obtain that $R$ is a modified $r$-matrix on the Lie algebra $\mathfrak{sl}(n, \mathbb{R})$. Assume that $a=x+u\in \mathfrak{sl}(n, \mathbb{R})$ where $x\in B(\mathfrak{sl}(n, \mathbb{R}))$ and $u\in A$, such that $\dM_{\CE}^{R}a=0$, that is
\begin{equation*}
\dM_{\CE}^{R}a(y)=[R(y), a]-R([y, a])=0, \quad \forall y\in\mathfrak{sl}(n, \mathbb{R}).
\end{equation*}
\begin{itemize}
  \item For any $y\in A$, $[R(y), a]-R([y, a])=0$ implies that $x\in H$.

  \item For any $y\in B(\mathfrak{sl}(n, \mathbb{R}))$,   $[R(y), a]-R([y, a])=0$ implies that $u=0$.
\end{itemize}
Thus   $\dM_{\CE}^{R}a=0$ if and only if $a\in H$. Therefore,   $H^{1}(R)\cong \mathbb{R}^{n-1}$.
}
\end{ex}

\begin{ex}{\rm
Consider the Lie algebra $\g=\mathfrak{sl}(2, \mathbb{R})$, where the Lie bracket is given by $[e, f]=h, [h, e]=2e$ and $[h, f]=-2f$ with respect to the basis $\{e, f, h\}$. Then $R:\mathfrak{sl}(2, \mathbb{R})\lon\mathfrak{sl}(2, \mathbb{R})$ defined by
\begin{equation*}
R(e, f, h)=(e, f, h)\left(
                      \begin{array}{ccc}
                        1 & 0 & 0 \\
                        0 & -1 & 0 \\
                        0 & 0 & 1 \\
                      \end{array}
                    \right),
\end{equation*}
is a modified $r$-matrix.
Let $T=\left(
                      \begin{array}{ccc}
                        t_{11} & t_{12} & t_{13} \\
                        t_{21} & t_{22} & t_{23} \\
                        t_{31} & t_{32} & t_{33} \\
                      \end{array}
                    \right): \mathfrak{sl}(2, \mathbb{R})\lon\mathfrak{sl}(2, \mathbb{R})$ satisfy
 $\dM_{\CE}^{R}T=0$. Then we obtain
\begin{eqnarray*}
~0&=&[e, T(f)]+[f, T(e)]-R([e, T(f)])+R([f, T(e)]),\\
~0&=&[e, T(h)]-[h, T(e)]-R([e, T(h)])+R([h, T(e)])+4T(e)
\end{eqnarray*}
and
\begin{equation*}
0=-[f, T(h)]-[h, T(f)]-R([f, T(h)])+R([h, T(f)]).
\end{equation*}
Thus we have $t_{11}=t_{21}=t_{31}=0$ and $t_{22}=t_{13}=0$. By Example \ref{H1}, we have $B^{2}(R)=\mathrm{Im}\dM_{\CE}^{R}\cong\frac{\g}{\ker\dM_{\CE}^{R}}=\frac{\g}{H^{1}(R)}\cong \mathbb{R}^{2}$. Thus $H^{2}(R)\simeq \mathbb{R}^{2}$.
}
\end{ex}

\section{Algebraic deformations of modified $r$-matrices}\label{sec:alg}

In this section, we construct a differential graded Lie algebra that governs algebraic deformations of a modified $r$-matrix.

Let $(\g,[\cdot,\cdot]_\g)$ be a Lie algebra. We consider the graded vector space $C^*(\g)=\oplus_{k=1}^{+\infty}\Hom(\wedge^{k}\g,\g)$. Define a skew-symmetric bracket operation
 \begin{equation*}
 \Courant{\cdot,\cdot}: \Hom(\wedge^{p}\g,\g)\times\Hom(\wedge^{q}\g,\g)\lon\Hom(\wedge^{p+q}\g,\g)
 \end{equation*}
 by
\begin{eqnarray}\label{defiCour}
&&\Courant{f,g}(x_1,x_2,\cdots,x_{p+q})\\
\nonumber&=&\sum_{\sigma\in S(q,1,p-1)}(-1)^{\sigma}f([g(x_{\sigma(1)},\cdots,x_{\sigma(q)}),x_{\sigma(q+1)}]_{\g},x_{\sigma(q+2)},\cdots,x_{\sigma(p+q)})\\
\nonumber&&-(-1)^{pq}\sum_{\sigma\in S(p,1,q-1)}(-1)^{\sigma}g([f(x_{\sigma(1)},\cdots,x_{\sigma(p)}),x_{\sigma(p+1)}]_{\g}, x_{\sigma(p+2)},\cdots,x_{\sigma(p+q)})\\
\nonumber&&+(-1)^{pq}\sum_{\sigma\in S(p,q)}(-1)^{\sigma}[f(x_{\sigma(1)},\cdots,x_{\sigma(p)}),g(x_{\sigma(p+1)},\cdots,x_{\sigma(p+q)})]_{\g},
\end{eqnarray}
for all $f\in\Hom(\wedge^{p}\g,\g), g\in\Hom(\wedge^{q}\g,\g)$.
\emptycomment{
In fact, the bracket $\Courant{\cdot,\cdot}$ can be given intrinsically using the Nijenhuis-Richardson bracket $[\cdot,\cdot]_{\NR}$ \cite{NR,NR2} via the derived bracket \cite{Kosmann-Schwarzbach} by
\begin{equation}\label{defiCour1}
 \Courant{f,g}=(-1)^{p}[[\pi,f]_\NR,g]_\NR,
\end{equation}
where we denote the Lie bracket $[\cdot,\cdot]_{\g}$ by $\pi$.
}

Then we have the following theorem characterizing modified $r$-matrices.

\begin{thm}\label{cormdfi}
Let $(\g, [\cdot,\cdot]_{\g})$ be a Lie algebra. Then $(C^*(\g),\Courant{\cdot,\cdot})$ is a graded Lie algebra and its Maurer-Cartan elements are precisely  Rota-Baxter operators of weight $0$.

Moreover, a linear map $R\in\gl(\g)$ is a modified $r$-matrix on the Lie algebra $\g$ if and only if $R$ satisfies the equation
\begin{equation}\label{defieqmdfi}
\Courant{R,R}=2\pi.
\end{equation}
where denote $[\cdot, \cdot]_\g$ by $\pi$.
\end{thm}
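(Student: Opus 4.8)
The plan is to prove the theorem in two halves: first the graded Lie algebra statement together with the identification of its Maurer--Cartan elements, then the reformulation of the modified classical Yang--Baxter equation as $\Courant{R,R}=2\pi$. For the first half, I would verify directly that $\Courant{\cdot,\cdot}$ is graded skew-symmetric (this is essentially built into the definition via the sum over shuffles, modulo a sign bookkeeping check between the $S(q,1,p-1)$ and $S(p,1,q-1)$ terms) and satisfies the graded Jacobi identity. Rather than grinding through the shuffle combinatorics by hand, I would invoke the derived-bracket description hinted at in the commented-out formula \eqref{defiCour1}: the Nijenhuis--Richardson bracket $[\cdot,\cdot]_{\NR}$ on $\oplus_k \Hom(\wedge^k\g,\g)$ is a graded Lie bracket, the Lie bracket $\pi=[\cdot,\cdot]_\g$ satisfies $[\pi,\pi]_{\NR}=0$ (Jacobi identity), and hence the derived bracket $\Courant{f,g}=(-1)^{|f|}[[\pi,f]_{\NR},g]_{\NR}$ is again a graded Lie bracket on the shifted space by Kosmann-Schwarzbach's theorem on derived brackets. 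One then only needs to check that this intrinsic formula expands to \eqref{defiCour}, which is a finite, mechanical computation. With the GLA structure in hand, $R\in\Hom(\g,\g)$ (degree $1$, i.e. a Maurer--Cartan candidate) is a Maurer--Cartan element iff $\Courant{R,R}=0$; expanding $\Courant{R,R}(x,y)$ using \eqref{defiCour} with $p=q=1$ gives exactly $2\big(R([R(x),y]_\g+[x,R(y)]_\g) - [R(x),R(y)]_\g\big)$, whose vanishing is the Rota-Baxter identity of weight $0$.

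For the second half, the key observation is that $\pi=[\cdot,\cdot]_\g$, viewed as an element of $\Hom(\wedge^2\g,\g)$, is itself a degree-$2$ element of $C^*(\g)$, and the claim is simply that the \emph{same} expansion of $\Courant{R,R}$ as above, now set equal to $2\pi$ rather than $0$, reproduces equation \eqref{assmd}. Concretely, from the $p=q=1$ case of \eqref{defiCour},
\begin{equation*}
\Courant{R,R}(x,y)=2R([R(x),y]_\g)-2R([y,R(x)]_\g)\ \text{-type terms}+2[R(x),R(y)]_\g,
\end{equation*}
and after collecting, $\tfrac12\Courant{R,R}(x,y)= R([R(x),y]_\g+[x,R(y)]_\g)-[R(x),R(y)]_\g$; meanwhile $\pi(x,y)=[x,y]_\g$. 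So $\Courant{R,R}=2\pi$ is equivalent to $[R(x),R(y)]_\g = R([R(x),y]_\g+[x,R(y)]_\g)-[x,y]_\g$ for all $x,y$, which is precisely \eqref{assmd}. I would present this as a short direct computation.

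The main obstacle I anticipate is purely bookkeeping: getting the signs and the shuffle-set conventions in \eqref{defiCour} to match the derived-bracket formula \eqref{defiCour1} exactly, and making sure the factor of $2$ (and not $\pm 1$ or $\pm 4$) comes out correctly when $p=q=1$ so that both the Maurer--Cartan/Rota-Baxter-weight-$0$ statement and the $\Courant{R,R}=2\pi$ statement are internally consistent with each other. One clean way to sidestep most of this is to \emph{define} $\Courant{\cdot,\cdot}$ by \eqref{defiCour1} (taking the GLA property for free from the theory of derived brackets), then prove \eqref{defiCour} as a lemma by expanding the Nijenhuis--Richardson brackets; the low-degree case $p=q=1$ needed for both assertions can then be checked in isolation without the full combinatorics. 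The genuinely non-formal content of the theorem is light — it is really a repackaging of the classical fact that modified $r$-matrices differ from weight-$0$ Rota-Baxter operators precisely by the inhomogeneous term $2\pi$ — so I would keep the write-up correspondingly brief, deferring the heavy shuffle identities to the derived-bracket formalism.
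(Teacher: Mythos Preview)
Your proposal is correct and matches the paper's approach essentially line for line: the paper disposes of the graded Lie algebra assertion by citing \cite[Corollary 6.1]{TBGS} (whose proof is precisely the derived-bracket argument you sketch, and which the authors themselves had in a commented-out remark), and then carries out the same direct expansion of $\Courant{R,R}(x,y)$ in the $p=q=1$ case to obtain $2\big(R([R(x),y]_\g)+R([x,R(y)]_\g)-[R(x),R(y)]_\g\big)$, from which both the Maurer--Cartan/weight-$0$ Rota--Baxter characterization and the equivalence $\Courant{R,R}=2\pi\Leftrightarrow\eqref{assmd}$ are read off. Your sign and factor-of-$2$ bookkeeping is already consistent with theirs, so no adjustment is needed.
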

\begin{proof}
By \cite[Corollary 6.1]{TBGS}, $(C^*(\g),\Courant{\cdot,\cdot})$ is a graded Lie algebra.

 For $R\in\gl(\g)$, we have
\begin{equation*}
\Courant{R,R}(x,y)=2(R([R(x),y]_{\g})-R([R(y),x]_{\g})-[R(x),R(y)]_{\g}),\quad \forall x,y\in\g.
\end{equation*}
By this equality, we can deduce that on the one hand $R$ is a Rota-Baxter operator of weight $0$ if and only if $\Courant{R,R}=0$, i.e. $R$ is a Maurer-Cartan element. On the other hand, $R$ is a modified $r$-matrix on the Lie algebra $\g$ if and only if $R$ satisfies   \eqref{defieqmdfi}.
\end{proof}

\begin{pro}\label{propd}
Let $R$ be a modified $r$-matrix on a Lie algebra $(\g, [\cdot,\cdot]_{\g})$. Then $\Courant{R,R}$ is in the center of the graded Lie algebra $(C^{*}(\g), \Courant{\cdot,\cdot})$.
\end{pro}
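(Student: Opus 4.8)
The plan is to show $\Courant{\Courant{R,R},g}=0$ for every $g\in\Hom(\wedge^q\g,\g)$, which says precisely that $\Courant{R,R}$ is central. The key observation is that by Theorem \ref{cormdfi} we have $\Courant{R,R}=2\pi$, so it suffices to prove that $\Courant{\pi,g}=0$ for all $g\in C^*(\g)$. In other words, the Lie bracket $\pi=[\cdot,\cdot]_\g$, viewed as an element of the graded Lie algebra $(C^*(\g),\Courant{\cdot,\cdot})$, is itself central.

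First I would compute $\Courant{\pi,g}$ directly from the defining formula \eqref{defiCour} with $f=\pi$ (so $p=2$). The three sums become: a term where $g$ is plugged into one slot of $\pi$ and the output bracketed again (from the first sum, using the shuffles in $S(q,1,1)$), a term of the form $[\pi(x_{\sigma(1)},x_{\sigma(2)}),\cdot]$-type coming from the second sum (shuffles in $S(2,1,q-1)$), and the term from the third sum (shuffles in $S(2,q)$) which, since $\pi$ has arity $2$, produces $[[\,\cdot\,,\,\cdot\,]_\g, g(\cdots)]_\g$-type expressions. Writing everything out, the first and third sums together should reorganize into something governed by the Jacobi identity of $[\cdot,\cdot]_\g$, while the middle sum, because $\pi$ is skew-symmetric and $g$ is skew-symmetric, is seen to vanish identically (the internal shuffle $S(2,1,q-1)$ combined with antisymmetry of $\pi$ in its two arguments kills it). What remains is exactly the Jacobiator of $[\cdot,\cdot]_\g$ applied to appropriate arguments, hence zero.

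Alternatively, and more cleanly, I would invoke the intrinsic description of $\Courant{\cdot,\cdot}$ via the Nijenhuis--Richardson bracket: $\Courant{f,g}=(-1)^{p}[[\pi,f]_{\NR},g]_{\NR}$ (as recorded in the commented passage, and as used implicitly in \cite{TBGS}). Under this description, $\Courant{\pi,g}=(-1)^{2}[[\pi,\pi]_\NR,g]_\NR=[[\pi,\pi]_\NR,g]_\NR$, and $[\pi,\pi]_\NR=0$ is precisely the Jacobi identity for the Lie algebra $\g$. Hence $\Courant{\pi,g}=0$ for all $g$, so $\Courant{R,R}=2\pi$ is central. If one prefers to stay within the self-contained combinatorial setup of the paper, the same conclusion follows by the direct shuffle computation above; either way the argument is short.

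The main obstacle is purely bookkeeping: making sure the shuffle sums $S(q,1,p-1)$, $S(p,1,q-1)$ and $S(p,q)$ are handled with the correct signs when specializing $p=2$, and correctly identifying which combination reassembles into the Jacobiator of $[\cdot,\cdot]_\g$. There is no conceptual difficulty once one recognizes that centrality of $\Courant{R,R}$ is equivalent to centrality of $\pi$, which in turn is just a restatement of the Jacobi identity in the Nijenhuis--Richardson formalism. I would present the Nijenhuis--Richardson argument as the proof and, if desired, remark that it can also be verified by the explicit formula.
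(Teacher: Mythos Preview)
Your approach is essentially the same as the paper's: reduce to showing $\Courant{\pi,g}=0$ via Theorem~\ref{cormdfi} and then compute the three shuffle sums in \eqref{defiCour} with $f=\pi$, $p=2$. The paper writes out exactly those three sums and asserts the total is zero.

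One small correction to your bookkeeping sketch: the middle sum $\sum_{\sigma\in S(2,1,q-1)}(-1)^\sigma g([[x_{\sigma(1)},x_{\sigma(2)}]_\g,x_{\sigma(3)}]_\g,\dots)$ does \emph{not} vanish by skew-symmetry. The shuffle condition $\sigma(1)<\sigma(2)$ already absorbs the antisymmetry of $\pi$, so there is no further cancellation from that. What kills this sum is the Jacobi identity: for each fixed triple $\{a,b,c\}$ occupying positions $1,2,3$, the three admissible shuffles produce $[[x_a,x_b]_\g,x_c]_\g+[[x_c,x_a]_\g,x_b]_\g+[[x_b,x_c]_\g,x_a]_\g$ inside the first slot of $g$. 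Thus \emph{all three} sums vanish by Jacobi (the first and third combine into a Jacobiator for each complementary pair, and the second is a Jacobiator on its own), not two by Jacobi and one by antisymmetry. Your Nijenhuis--Richardson alternative is of course clean and correct; the paper in fact had that argument in mind as well (it is the commented-out passage you noticed), but opted for the explicit computation in the published proof.
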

\begin{proof}
\emptycomment{
 For all $f\in\Hom(\wedge^{k}\g,\g)$, by   \eqref{defiCour1} and Theorem \ref{cormdfi}, we have
\begin{eqnarray*}
\Courant{\Courant{R,R},f}&=&\Courant{2\pi,f}=2[[\pi,\pi]_\NR,f]_\NR=0,
\end{eqnarray*}
which implies that $\Courant{R,R}$ is in the center of $C^{*}(\g)$.
}
Denote the Lie bracket $[\cdot,\cdot]_{\g}$ by $\pi$. Since $R$ is a modified $r$-matrix on the Lie algebra $\g$, we have $\Courant{R,R}=2\pi$ via Theorem \ref{cormdfi}. For all $f\in\Hom(\wedge^{k}\g,\g)$, by \eqref{defiCour}, we have
\begin{eqnarray*}
&&\Courant{2\pi,f}(x_1,\cdots,x_k,x_{k+1},x_{k+2})\\
&=&2\Big(\sum_{\sigma\in S(k,1,1)}(-1)^{|\sigma|}\pi(\pi(f(x_{\sigma(1)},\cdots,x_{\sigma(k)}),x_{\sigma(k+1)}),x_{\sigma(k+2)})\\
&&-\sum_{\sigma\in S(2,1,k-1)}(-1)^{|\sigma|}f(\pi(\pi(x_{\sigma(1)},x_{\sigma(2)}),x_{\sigma(3)}),x_{\sigma(4)},\cdots,x_{\sigma(k+2)})\\
&&+\sum_{\sigma\in S(2,k)}(-1)^{|\sigma|}\pi(\pi(x_{\sigma(1)},x_{\sigma(2)}),f(x_{\sigma(3)},\cdots,x_{\sigma(k+2)}))\Big)\\
&=&0,
\end{eqnarray*}
which implies that $\Courant{R,R}$ is in the center of $C^{*}(\g)$.
\end{proof}

We denote $\Courant{R,\cdot}$ by $\dM_{R}$. Now we obtain the differential graded Lie algebra that governs algebraic deformations of a modified $r$-matrix.

\begin{thm}
With the above notations, $(C^{*}(\g), \Courant{\cdot,\cdot}, \dM_{R})$ is a differential graded Lie algebra.

Furthermore, $R+R'$ is still a modified $r$-matrix on the Lie algebra $(\g, [\cdot,\cdot]_\g)$ if and only if $R'$ is a Maurer-Cartan element of the differential graded Lie algebra $(C^{*}(\g), \Courant{\cdot,\cdot}, \dM_{R})$.
\end{thm}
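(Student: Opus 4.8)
The plan is to deduce both statements from Theorem~\ref{cormdfi} and Proposition~\ref{propd} with essentially no new computation. Work with the grading on $C^{*}(\g)=\oplus_{k\geq 1}\Hom(\wedge^{k}\g,\g)$ for which $R\in\gl(\g)=\Hom(\wedge^{1}\g,\g)$ is a degree-$1$ element --- equivalently, the grading for which ``Rota--Baxter operator of weight $0$'' is the Maurer--Cartan equation in the graded Lie algebra $(C^{*}(\g),\Courant{\cdot,\cdot})$ of Theorem~\ref{cormdfi}. Then $\dM_{R}=\Courant{R,\cdot}$ raises degree by $1$, which is the correct degree for the differential of a DGLA, so to prove that $(C^{*}(\g),\Courant{\cdot,\cdot},\dM_{R})$ is a DGLA it remains only to check that $\dM_{R}$ is a graded derivation of $\Courant{\cdot,\cdot}$ and that $\dM_{R}^{2}=0$.

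The derivation property is exactly the graded Jacobi identity for $\Courant{\cdot,\cdot}$ evaluated on the triple $(R,f,g)$: it reads $\Courant{R,\Courant{f,g}}=\Courant{\Courant{R,f},g}+(-1)^{|f|}\Courant{f,\Courant{R,g}}$, i.e. $\dM_{R}\Courant{f,g}=\Courant{\dM_{R}f,g}+(-1)^{|f|}\Courant{f,\dM_{R}g}$. For $\dM_{R}^{2}=0$ I apply the graded Jacobi identity to $(R,R,f)$: since $R$ has odd degree the sign $(-1)^{|R||R|}=-1$ collapses the identity to $2\Courant{R,\Courant{R,f}}=\Courant{\Courant{R,R},f}$, that is $\dM_{R}^{2}f=\tfrac12\Courant{\Courant{R,R},f}$. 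Now $R$ is a modified $r$-matrix, so $\Courant{R,R}=2\pi$ by Theorem~\ref{cormdfi}, and $\Courant{R,R}$ lies in the center of $(C^{*}(\g),\Courant{\cdot,\cdot})$ by Proposition~\ref{propd}; hence $\Courant{\Courant{R,R},f}=0$ for all $f$, and $\dM_{R}^{2}=0$. (One can also check directly, comparing \eqref{defiCour} with $p=1$ against \eqref{defor-cob}, that $\dM_{R}$ recovers the coboundary operator $\dM_{\CE}^{R}$, so that the cohomology of this DGLA is the cohomology of the modified $r$-matrix $R$ from Section~\ref{sec:coh}.)

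For the Maurer--Cartan characterization I just expand. By Theorem~\ref{cormdfi}, $R+R'$ is a modified $r$-matrix if and only if $\Courant{R+R',R+R'}=2\pi$. By bilinearity and graded skew-symmetry of $\Courant{\cdot,\cdot}$ --- which for the degree-$1$ elements $R,R'$ gives $\Courant{R,R'}=\Courant{R',R}$ --- the left-hand side equals $\Courant{R,R}+2\Courant{R,R'}+\Courant{R',R'}$; cancelling $\Courant{R,R}=2\pi$, this is equivalent to $2\Courant{R,R'}+\Courant{R',R'}=0$, i.e. $\dM_{R}R'+\tfrac12\Courant{R',R'}=0$, which is exactly the Maurer--Cartan equation of $(C^{*}(\g),\Courant{\cdot,\cdot},\dM_{R})$.

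The only delicate point is the sign and degree bookkeeping in the graded Jacobi identity, so that the coefficient $\tfrac12$ comes out consistently in $\dM_{R}^{2}=\tfrac12\Courant{\Courant{R,R},\cdot}$ and in the Maurer--Cartan equation; once the grading convention (with $R$ in odd degree) is fixed this is forced, and there is no genuinely new computation beyond Theorem~\ref{cormdfi} and Proposition~\ref{propd}.
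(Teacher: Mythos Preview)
Your proof is correct and follows essentially the same route as the paper: the derivation property from graded Jacobi on $(R,f,g)$, the identity $\dM_{R}^{2}f=\tfrac12\Courant{\Courant{R,R},f}$ from graded Jacobi on $(R,R,f)$ combined with Proposition~\ref{propd}, and the expansion of $\Courant{R+R',R+R'}=2\pi$ for the Maurer--Cartan characterization. You are a bit more explicit than the paper about the grading convention and the sign $\Courant{R,R'}=\Courant{R',R}$ for odd-degree elements, which is helpful.
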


\begin{proof}
  It follows from the graded Jacobi identity that $\dM_{R}$ is a graded derivation on the graded Lie algebra $(C^{*}(\g), \Courant{\cdot,\cdot})$. By Proposition \ref{propd}, we have
$$\dM_{R}^{2}f=\Courant{R,\Courant{R,f}}=\Courant{\Courant{R,R},f}-\Courant{R,\Courant{R,f}},$$
which implies that
\begin{equation*}
\dM_{R}^{2}f=\Courant{R,\Courant{R,f}}=\frac{1}{2}\Courant{\Courant{R,R},f}=0.
\end{equation*}
 Therefore, $(C^{*}(\g), \Courant{\cdot,\cdot}, \dM_{R})$ is a differential graded Lie algebra.

Let $R'$ be a linear map from $\g$ to $\g$. Then $R+R'$ is a modified $r$-matrix if and only if
\begin{equation*}
\Courant{R+R',R+R'}=2\pi,
\end{equation*}
that is
\begin{equation*}
0=\Courant{R,R'}+\frac{1}{2}\Courant{R',R'}=\dM_{R}R'+\frac{1}{2}\Courant{R',R'}.
\end{equation*}
Thus $R+R'$ is still a modified $r$-matrix on the Lie algebra $(\g, [\cdot,\cdot]_\g)$ if and only if $R'$ is a Maurer-Cartan element of the differential graded Lie algebra $(C^{*}(\g), \Courant{\cdot,\cdot}, \dM_{R})$.
\end{proof}

At the end of this section, we establish the relationship between the coboundary operator $\dM_{\CE}^{R}$ and the differential $\dM_R$.

\begin{pro}
Let $R$ be a modified $r$-matrix on a Lie algebra $(\g, [\cdot,\cdot]_{\g})$. Then we have
\begin{equation*}
\dM_{\CE}^{R}(f)=(-1)^{n-1}\Courant{R,f}, \quad \forall f\in\Hom(\wedge^{n-1}\g,\g).
\end{equation*}
\end{pro}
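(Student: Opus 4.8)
The plan is to verify the identity $\dM_{\CE}^{R}(f)=(-1)^{n-1}\Courant{R,f}$ by direct computation, expanding both sides on arbitrary arguments $x_1,\dots,x_n\in\g$ and matching terms. First I would write out $\Courant{R,f}$ using \eqref{defiCour} with $p=1$, $g=f$ of arity $q=n-1$: the three sums over shuffles collapse considerably because $R$ has arity one, so $S(q,1,p-1)=S(n-1,1,0)$ contributes the terms $\sum_i (-1)^{?}R([f(x_{i_1},\dots,x_{i_{n-1}}),x_{i_n}]_\g)$ over all ways of singling out the last argument, $S(p,1,q-1)=S(1,1,n-2)$ contributes terms of the form $f([R(x_j),x_k]_\g,\dots)$, and $S(p,q)=S(1,n-1)$ contributes $\sum_i [R(x_i),f(\dots\hat{x_i}\dots)]_\g$. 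The sign bookkeeping is the only delicate point: I would fix the convention that in $S(q,1,p-1)$ the shuffle picking out position $i$ as the ``middle'' slot carries sign $(-1)^{i}$ (up to the global shift), and similarly track the $(-1)^{pq}=(-1)^{n-1}$ prefactors.

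Next I would compare this with the explicit formula \eqref{defor-cob} for $\dM_{\CE}^{R}f$ on $x_1,\dots,x_n$ (note $f$ there has arity $k=n-1$, so $\dM_{\CE}^{R}f$ has arity $n$). That formula has exactly three groups of terms: the $[R(x_i),f(\dots\hat{x_i}\dots)]_\g$ terms, the $-R([x_i,f(\dots\hat{x_i}\dots)]_\g)$ terms, and the $f([R(x_i),x_j]_\g+[x_i,R(x_j)]_\g,\dots\hat{x_i}\dots\hat{x_j}\dots)$ terms. Each of these should match, up to the overall sign $(-1)^{n-1}$, one of the three groups coming from $\Courant{R,f}$: the bracket group matches the $S(1,n-1)$ sum; the $R([x_i,f]_\g)$ group matches the $S(n-1,1,0)$ sum after using skew-symmetry of $[\cdot,\cdot]_\g$ to swap $[f(\dots),x]_\g=-[x,f(\dots)]_\g$; and the $f([R(x_i),x_j]_\g+[x_i,R(x_j)]_\g,\dots)$ group matches the $S(1,1,n-2)$ sum after combining the two orderings of the pair $\{i,j\}$, since $[f(\dots),\dots]$ with $f=R$ applied to a single slot produces precisely $[R(x_i),x_j]_\g$ and $[R(x_j),x_i]_\g$ type terms whose sum, upon antisymmetrization, yields $[R(x_i),x_j]_\g+[x_i,R(x_j)]_\g$ inserted antisymmetrically.

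The main obstacle I anticipate is purely combinatorial: matching the signs of the $S(1,1,n-2)$ shuffle sum (which produces an \emph{ordered} pair of distinguished slots, one fed to $R$ and one brought into the bracket) against the $\sum_{i<j}$ sum in \eqref{defor-cob} (which has a single distinguished \emph{unordered} pair but a symmetrized argument $[R(x_i),x_j]_\g+[x_i,R(x_j)]_\g$). I would handle this by observing that the map $f$ is already skew-symmetric in its arguments, so reordering the remaining $n-2$ entries costs the expected Koszul sign, and then checking on the two representative shuffles that send $(i,j)\mapsto$ (slot to $R$, slot to bracket) and $(j,i)\mapsto$ (slot to $R$, slot to bracket) that the resulting two contributions add up with a common sign $(-1)^{i+j}(-1)^{n-1}$. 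Alternatively — and this is probably cleaner — I would avoid the shuffle arithmetic entirely by using the commented-out intrinsic description $\Courant{R,f}=(-1)^{p}[[\pi,R]_{\NR},f]_{\NR}$ via the Nijenhuis--Richardson bracket and the standard fact that $\dM_{\CE}^{R}=[[\pi,R]_{\NR},-]_{\NR}$ up to sign is exactly the Chevalley--Eilenberg differential of $\g_R$ with the given coefficients; but since the paper states \eqref{defor-cob} and \eqref{defiCour} explicitly, I would present the direct term-by-term matching as the primary argument, and simply remark that for $n=1$ (i.e. $f\in\g$) both sides reduce to $y\mapsto [R(y),f]_\g-R([y,f]_\g)$, consistent with the $(-1)^{0}=1$ sign.
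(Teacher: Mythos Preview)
Your proposal is correct and follows essentially the same approach as the paper: a direct expansion of $(-1)^{n-1}\Courant{R,f}(x_1,\dots,x_n)$ via \eqref{defiCour} with $p=1$, $q=n-1$, followed by term-by-term identification of the three shuffle sums $S(n-1,1)$, $S(1,1,n-2)$, $S(1,n-1)$ with the three groups in \eqref{defor-cob}. The paper handles the sign bookkeeping exactly as you anticipate (pairing the two orderings in the $S(1,1,n-2)$ sum into a single $\sum_{i<j}$ with the symmetrized argument), so your concern there is well-placed but resolves just as you describe.
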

\begin{proof}
For any $f\in\Hom(\wedge^{n-1}\g,\g)$ and $x_i,1\leq i\leq n$, by \eqref{defiCour}, we have
\begin{eqnarray*}
&&(-1)^{n-1}\Courant{R,f}(x_1,\cdots,x_n)\\
&=&(-1)^{n-1}\Big(\sum_{\sigma\in S(n-1,1)}(-1)^{\sigma}R([f(x_{\sigma(1)},\cdots,x_{\sigma(n-1)}),x_{\sigma(n)}]_{\g})\\
&&-(-1)^{n-1}\sum_{\sigma\in S(1,1,n-2)}(-1)^{\sigma}f([R(x_{\sigma(1)}),x_{\sigma(2)}]_{\g},x_{\sigma(3)},\cdots,x_{\sigma(n)})\\
&&+(-1)^{n-1}\sum_{\sigma\in S(1,n-1)}(-1)^{\sigma}[R(x_{\sigma(1)}),f(x_{\sigma(2)},\cdots,x_{\sigma(n)})]_{\g}\Big)\\
&=&\sum_{i=1}^{n}(-1)^{i+1}R([f(x_{1},\cdots,\hat{x_i},\cdots,x_{n}),x_{i}]_{\g})\\
&&+\sum_{i<j}(-1)^{i+j}\Big(f([R(x_{i}),x_{j}]_{\g},x_{1},\cdots,\hat{x_i},\cdots,\hat{x_j},\cdots,x_{n})\\
&&-f([R(x_{j}),x_{i}]_{\g},x_{1},\cdots,\hat{x_i},\cdots,\hat{x_j},\cdots,x_{n})\Big)\\
&&+\sum_{i=1}^{n}(-1)^{i+1}[R(x_i),f(x_{1},\cdots,\hat{x_{i}},\cdots,x_{n})]_{\g}\\
&=&\dM_{\CE}^{R}(f)(x_1,\cdots,x_n).
\end{eqnarray*}
We finish the proof.
\end{proof}

\section{Geometric deformations of modified $r$-matrices}\label{sec:geo}

In this section, we study geometric deformations of modified $r$-matrices following the approach developed by Crainic,   Schatz and  Struchiner. We show that the condition $H^{2}(R)=0$ will imply certain rigidity theorem, and the condition $H^{3}(R)=0$ will imply the space of modified $r$-matrices on the Lie algebra $\g$ is a manifold in a neighborhood of $R$. We also give the necessary and sufficient condition on a 2-cocycle giving a geometric deformation using the Kuranishi map.

\begin{defi}
Let $R$ be a modified $r$-matrix on a Lie algebra $(\g, [\cdot, \cdot]_\g)$. A {\bf geometric deformation} of $R$ is a smooth one parameter family of modified $r$-matrices $R_t$ on the Lie algebra $(\g, [\cdot, \cdot]_\g)$ such that $R_0=R$.
\end{defi}

\begin{defi}
Two geometric deformations $R_t$ and $R_t'$ of $R$ are called {\bf equivalent} if there exists a smooth family of modified $r$-matrices isomorphism $\varphi_t: R_t\lon R_t'$ such that $\varphi_0=\Id$, where $\varphi_t$ are inner automorphisms of the Lie algebra $\g$.
\end{defi}

Let $R_t$ be a geometric deformation of $R$. Denote $\frac{d}{dt}|_{t=0}R_t$ by $\dot{R_0}$. Then there is the following proposition.
\begin{pro}\label{protangentgamma}
With the above notations, $\dot{R_0}$ is a $2$-cocycle in $C^{2}(R)$. Moreover if $R_t$ and $R_t'$ are equivalent geometric deformations of $R$, then $[\dot{R_0}]=[\dot{R_0'}]$ in $H^{2}(R)$.
\end{pro}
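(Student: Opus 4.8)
The plan is to reduce both assertions to differentiating the defining relations at $t=0$ and then recognizing the resulting identities as statements about the Chevalley--Eilenberg differential $\dM_{\CE}^{R}$.

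First I would differentiate the modified classical Yang--Baxter equation \eqref{assmd} for the family $R_t$ with respect to $t$ and set $t=0$. Using the Leibniz rule and $R_0=R$, this yields, for all $x,y\in\g$,
\begin{equation*}
[\dot{R_0}(x),R(y)]_\g+[R(x),\dot{R_0}(y)]_\g=\dot{R_0}([R(x),y]_\g+[x,R(y)]_\g)+R([\dot{R_0}(x),y]_\g+[x,\dot{R_0}(y)]_\g).
\end{equation*}
Moving the two $R$-terms to the left-hand side and using skew-symmetry of $[\cdot,\cdot]_\g$ to put $\dot{R_0}(x)$ into the second argument, I expect to land exactly on the closedness condition \eqref{eqclosed} with $f=\dot{R_0}$; by definition this says $\dM_{\CE}^{R}\dot{R_0}=0$, so $\dot{R_0}\in Z^{2}(R)$ (note $\dot{R_0}\in\gl(\g)=C^{2}(R)$ as a cochain automatically). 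This is a purely bookkeeping computation.

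For the second assertion, let $R_t$ and $R_t'$ be equivalent through a smooth family $\varphi_t\in\Inn(\g)$ of inner automorphisms with $\varphi_0=\Id$ and $\varphi_t\circ R_t=R_t'\circ\varphi_t$. Since $\Inn(\g)$ is the connected Lie subgroup of $\Aut(\g)$ with Lie algebra $\InnDer(\g)$, the velocity of $\varphi_t$ at the identity is an inner derivation, so $\frac{d}{dt}\big|_{t=0}\varphi_t=\ad_N$ for some $N\in\g$. Differentiating $\varphi_t\circ R_t=R_t'\circ\varphi_t$ at $t=0$ and using $\varphi_0=\Id$, $R_0=R_0'=R$ gives $\ad_N\circ R+\dot{R_0}=\dot{R_0'}+R\circ\ad_N$, that is, $\dot{R_0'}-\dot{R_0}=\ad_N\circ R-R\circ\ad_N$. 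Finally, evaluating \eqref{defor-cob} for $k=0$ together with \eqref{defi-rep} gives $(\dM_{\CE}^{R}N)(y)=[R(y),N]_\g-R([y,N]_\g)=-(\ad_N\circ R-R\circ\ad_N)(y)$, hence $\dot{R_0'}-\dot{R_0}=-\dM_{\CE}^{R}N\in B^{2}(R)$, and therefore $[\dot{R_0}]=[\dot{R_0'}]$ in $H^{2}(R)$.

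The only genuine subtlety, and the step I would treat most carefully, is the identification $\frac{d}{dt}\big|_{t=0}\varphi_t=\ad_N$: one must invoke the fact that the inner automorphism group is an (immersed, initial) Lie subgroup of $\Aut(\g)$ whose Lie algebra is precisely $\InnDer(\g)$, so that the tangent vector at $\Id$ of a smooth curve of inner automorphisms through $\Id$ is of the form $\ad_N$. Everything else is direct differentiation followed by matching with the formulas for $\rho$ and $\dM_{\CE}^{R}$ recorded above.
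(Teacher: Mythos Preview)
Your proposal is correct and follows essentially the same approach as the paper: differentiate the modified classical Yang--Baxter equation for $R_t$ at $t=0$ to obtain the closedness condition \eqref{eqclosed} for $\dot{R_0}$, and differentiate the intertwining relation $\varphi_t\circ R_t=R_t'\circ\varphi_t$ at $t=0$ together with $\dot{\varphi_0}=\ad_N$ to see that $\dot{R_0'}-\dot{R_0}$ is a $\dM_{\CE}^{R}$-coboundary. Your extra care in justifying $\frac{d}{dt}\big|_{t=0}\varphi_t\in\InnDer(\g)$ via the Lie subgroup structure of $\Inn(\g)$ is a welcome addition that the paper leaves implicit.
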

\begin{proof}
Since $R_t$ is a geometric deformation of $R$, for any $x, y\in\g$, we have
\begin{eqnarray}
\label{eqR'}&&[R(x), \dot{R_0}(y)]_\g+[\dot{R_0}(x), R(y)]_\g\\
\nonumber&=&\frac{d}{dt}|_{t=0}[R_t(x), R_t(y)]_\g\\
\nonumber&=&\frac{d}{dt}|_{t=0}(R_t([R_t(x), y]_\g+[x, R_t(y)]_\g)-[x, y]_\g)\\
\nonumber&=&\dot{R_0}([R(x), y]_\g+[x, R(y)]_\g)+R([\dot{R_0}(x), y]_\g+[x, \dot{R_0}(y)]_\g).
\end{eqnarray}
Thus by \eqref{eqclosed} and \eqref{eqR'}, we have $\dM^{R}_{\CE}(\dot{R_0})=0$.

Assume that $\varphi_t$ is an isomorphism from $R_t$ to $R'_t$,  that is
\begin{equation*}
\varphi_t(R_t(x))=R'_t(\varphi_t(x)), \quad \forall x\in\g.
\end{equation*}
Denote $\frac{d}{dt}|_{t=0}\varphi_t$ by $\dot{\varphi_0}$. Then we have $\dot{\varphi_0}(R(x))+\dot{R_0}(x)=R(\dot{\varphi_0}(x))+\dot{R'_0}(x)$.
Since $\varphi_t$ are inner automorphisms of the Lie algebra $\g$, it follows that $\dot{\varphi_0}$ is an inner derivation of the Lie algebra $\g$. Thus there exists $y\in\g$ such that $\dot{\varphi_0}=\ad_y$. Therefore, we have
\begin{equation*}
[y, R(x)]_\g+\dot{R_0}(x)=R([y, x]_\g)+\dot{R'_0}(x),
\end{equation*}
which implies $\dot{R_0}-\dot{R'_0}=\dM^{R}_{\CE}(y)$. Thus $[\dot{R_0}]=[\dot{R_0'}]$ in $H^{2}(R)$.
\end{proof}

Next, we consider under which conditions does a cocycle $f\in Z^{2}(R)$ determine a geometric deformation $R_t$. Define the Kuranishi map $K: Z^{2}(R)\lon H^{3}(R)$ by
\begin{equation*}
K(f)=[\Courant{f,f}], \quad \forall f\in Z^{2}(R).
\end{equation*}

Now we give a necessary condition of the above question. The sufficient condition need some preparations and  will be given at the end of this section.
\begin{pro}\label{pro:K}
Assume that there exists a geometric deformation $R_t$ of $R$ on a Lie algebra $(\g, [\cdot, \cdot]_\g)$ such that $\dot{R_0}=f\in Z^{2}(R)$, then $K(f)=0$.
\end{pro}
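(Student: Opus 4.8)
The plan is to extract the obstruction $\Courant{f,f}$ from the second-order term of the defining equation $\Courant{R_t,R_t}=2\pi$ (Theorem \ref{cormdfi}) and to recognize it as a coboundary, the explicit primitive being the second derivative $\ddot{R_0}:=\frac{d^{2}}{dt^{2}}\big|_{t=0}R_t\in\Hom(\g,\g)$.

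First I would note that, since every $R_t$ is a modified $r$-matrix on $\g$, Theorem \ref{cormdfi} gives $\Courant{R_t,R_t}=2\pi$ as an identity of smooth curves in $\Hom(\wedge^{2}\g,\g)$. Differentiating twice in $t$, using that $\Courant{\cdot,\cdot}$ is bilinear and graded skew-symmetric — hence \emph{symmetric} on the degree-one component $\Hom(\wedge^{1}\g,\g)$, to which $R_t$ and $\dot{R_t}$ belong — one obtains $\frac{d}{dt}\Courant{R_t,R_t}=2\Courant{R_t,\dot{R_t}}$ and then $\frac{d^{2}}{dt^{2}}\Courant{R_t,R_t}=2\Courant{\dot{R_t},\dot{R_t}}+2\Courant{R_t,\ddot{R_t}}$. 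Since the left-hand side is constant in $t$, evaluation at $t=0$ together with $\dot{R_0}=f$ yields
\[
\Courant{f,f}=-\Courant{R,\ddot{R_0}}.
\]
Next I would invoke the relation $\dM_{\CE}^{R}(g)=(-1)^{n-1}\Courant{R,g}$ for $g\in\Hom(\wedge^{n-1}\g,\g)$, established at the end of Section \ref{sec:alg}. Applied to $g=\ddot{R_0}\in\Hom(\g,\g)=\Hom(\wedge^{1}\g,\g)$, so $n=2$, it gives $\Courant{R,\ddot{R_0}}=-\dM_{\CE}^{R}(\ddot{R_0})$, and therefore $\Courant{f,f}=\dM_{\CE}^{R}(\ddot{R_0})\in B^{3}(R)$. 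Consequently $K(f)=[\Courant{f,f}]=0$ in $H^{3}(R)$, which is the assertion. (As a consistency remark, $\Courant{f,f}$ is in any case a $3$-cocycle for $f\in Z^{2}(R)$, which is what makes $K$ well defined; here we are simply producing an explicit primitive.)

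There is no genuine obstacle: the argument is just a second-order Taylor expansion of the Maurer–Cartan-type equation $\Courant{R_t,R_t}=2\pi$. The only points needing care are (i) the justification that one may differentiate this identity twice, which is immediate from the smoothness of $t\mapsto R_t$ and the bilinearity of $\Courant{\cdot,\cdot}$, and (ii) the bookkeeping of signs, both in the graded symmetry of $\Courant{\cdot,\cdot}$ on degree-one elements and in the normalization $\dM_{\CE}^{R}=(-1)^{n-1}\Courant{R,\cdot}$; a stray sign would still exhibit $\Courant{f,f}$ as a coboundary but would obscure the identification. Equivalently, one can run the same computation with the truncated expansion $R_t=R+tf+\frac{t^{2}}{2}\ddot{R_0}+o(t^{2})$ and read off the coefficient of $t^{2}$.
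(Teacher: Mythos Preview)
Your proof is correct and follows essentially the same approach as the paper's: a second-order Taylor expansion showing that $\Courant{f,f}$ is the coboundary of $g=\ddot{R_0}$. The only difference is stylistic---you differentiate the abstract identity $\Courant{R_t,R_t}=2\pi$ from Theorem~\ref{cormdfi} and invoke $\dM_{\CE}^{R}=(-1)^{n-1}\Courant{R,\cdot}$, whereas the paper plugs the truncated expansion into the modified $r$-matrix equation in components and identifies the $t^{2}$-coefficient directly; the content is identical.
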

\begin{proof}
Consider the Taylor expansion of $R_t$ around $t=0$, then we have
\begin{equation*}
R_t(x)=R(x)+tf(x)+\frac{t^{2}}{2}g(x)+o(t^{3}).
\end{equation*}
Since $[R_t(x), R_t(y)]_\g=R_t([R_t(x), y]_\g+[x, R_t(y)]_\g)-[x, y]_\g$ and $f\in Z^{2}(R)$, we have
\begin{equation}\label{Taylor}
\frac{t^{2}}{2}\Big(\dM^{R}_{\CE}(g)(x, y)+2[f(x), f(y)]_\g-2f([f(x), y]_\g+[x, f(y)]_\g)\Big)+o(t^3)=0.
\end{equation}
Thus by \eqref{defiCour} and \eqref{Taylor}, we obtain $\Courant{f, f}=\dM^{R}_{\CE}g$, which implies that $K(f)=0$.
\end{proof}

Let $E\stackrel{\pi}{\lon} M$ be a vector bundle. Assume that there is a smooth action $\cdot: G\times E\lon E$ of a Lie group $G$ on $E$ preserving the zero-section $Z: M\lon E$. It follows that $M$ inherits a $G$-action. We also denote the action of $G$ on $M$ by $\cdot: G\times M\lon M$. For all $x\in M$, define a smooth map $\mu_x: G\lon M$ by $\mu_x(g)=g\cdot x$. Denote the tangent map from $\g$ to $T_xM$ by $D(\mu_x)_{e_G}$, where $e_G$ is the unit of $G$.
\begin{defi}\rm(\cite{CSS})
A section $s: M\lon E$ is called {\bf equivariant} if $s$ satisfies
\begin{equation*}
s(g\cdot x)=g\cdot s(x), \quad \forall g\in G, x\in M.
\end{equation*}
\end{defi}

Denote the zero set of a section $s: M\lon E$ by $z(s)=\{x\in M| s(x)=0\}$. A zero $x\in M$ of $s$ is called {\bf non-degenerate} if the sequence
\begin{equation*}
\g\stackrel{D(\mu_x)_{e_G}}{\longrightarrow}T_xM\stackrel{D^{v}(s)_x}{\longrightarrow} E_x
\end{equation*}
is exact, where $D^{v}(s)_x$ is the vertical derivative of $s$ at $x$.

\begin{pro}\rm(\cite{CSS})\label{proorbR}
Let $s$ be an equivariant section of the vector bundle $E\stackrel{\pi}{\lon} M$ and $x$ be a non-degenerate zero of $s$. Then there is an open neighborhood $U$ of $x$ and a smooth map $p: U\lon G$ such that for all $m\in U$ with $s(m)=0$, one has $p(m)\cdot x=m$. In particular, the orbit of $x$ under the action of $G$ and the zero set of $s$ coincide in an open neighborhood of $x$.
\end{pro}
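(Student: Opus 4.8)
The plan is to deduce the statement from the local section theorem for submersions, using the non-degeneracy of $x$ to manufacture a transversal to the $G$-orbit along which the section $s$ is injective. First I would choose a linear complement $V\subseteq T_xM$ to $\mathrm{Im}(D(\mu_x)_{e_G})$, so that $T_xM=\mathrm{Im}(D(\mu_x)_{e_G})\oplus V$. This is essentially the only place the exactness hypothesis enters: if $v\in V$ also lies in $\ker D^{v}(s)_x$, then exactness forces $v\in\mathrm{Im}(D(\mu_x)_{e_G})$, hence $v=0$, so $D^{v}(s)_x$ is injective on $V$. Next I would pick a submanifold $N\ni x$ of $M$ with $T_xN=V$, realized as an affine slice in a chart around $x$. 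Reading $s$ in a local trivialization $E|_W\cong W\times E_x$ turns $s|_N$ into a map $\hat s\colon N\to E_x$ with $\hat s(x)=0$ and $D\hat s_x=D^{v}(s)_x|_V$ injective; thus $\hat s$ is an immersion at $x$, hence injective near $x$, and after shrinking $N$ we obtain $z(s)\cap N=\{x\}$.

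Then I would consider the smooth map $\Psi\colon G\times N\to M$, $\Psi(g,n)=g\cdot n$. At $(e_G,x)$ its differential carries $(\xi,v)$ to $D(\mu_x)_{e_G}(\xi)+v$, so its image is $\mathrm{Im}(D(\mu_x)_{e_G})+V=T_xM$; hence $\Psi$ is a submersion near $(e_G,x)$. The local section theorem then yields an open neighborhood $U$ of $x$ in $M$ together with a smooth map $U\to G\times N$, $m\mapsto(p(m),\nu(m))$, such that $p(m)\cdot\nu(m)=m$ for all $m\in U$ and $(p(x),\nu(x))=(e_G,x)$; shrinking $U$, I may assume $\nu(U)\subseteq N$ with $N$ as above.

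Finally I would verify that $p$ works. Because the $G$-action on $E$ covers the action on $M$ and preserves the zero section, each $g\in G$ restricts to a fibrewise diffeomorphism $E_y\to E_{g\cdot y}$ sending the zero vector to the zero vector; together with equivariance of $s$ this gives, for $m\in U$ with $s(m)=0$, that $0=s(m)=s(p(m)\cdot\nu(m))=p(m)\cdot s(\nu(m))$, so $s(\nu(m))=0$, whence $\nu(m)\in z(s)\cap N=\{x\}$ and therefore $p(m)\cdot x=m$. Conversely $s(g\cdot x)=g\cdot s(x)=0$, so on $U$ the orbit of $x$ and the zero set $z(s)$ coincide, which is the last assertion. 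I expect the bulk of the actual writing to be the bookkeeping of the successive shrinkings of $U$ and $N$ and the spelling out of the trivialization in the first step; the one genuinely substantive point is the single use of exactness to make $D^{v}(s)_x|_V$ injective, the reverse inclusion $\mathrm{Im}(D(\mu_x)_{e_G})\subseteq\ker D^{v}(s)_x$ being automatic since $s$ vanishes along $G$-orbits through a zero.
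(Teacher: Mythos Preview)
Your argument is correct and follows the standard slice-and-submersion strategy that one finds in the cited reference. Note, however, that the paper does not supply its own proof of this proposition: it is stated with attribution to \cite{CSS} and used as a black box, so there is nothing in the paper to compare your argument against.

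One small remark on presentation: in your final paragraph you assert that $\mathrm{Im}(D(\mu_x)_{e_G})\subseteq\ker D^{v}(s)_x$ is ``automatic.'' This is indeed straightforward, but it relies on the equivariance of $s$ together with the fact that the action preserves the zero section (so that $s(g\cdot x)=g\cdot s(x)=g\cdot 0_x=0_{g\cdot x}$ lies in the zero section for all $g$, and hence the derivative of $s$ along the orbit is purely horizontal). You use exactly these hypotheses elsewhere, so this is not a gap, but if you write this up you should make the dependence explicit rather than calling it automatic.
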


\begin{pro}\rm(\cite{CSS})\label{proorbR1}
Let $E$ and $F$ be vector bundles over a smooth manifold $M$. Let $s\in \Gamma(E)$ be a section and $\phi\in\Gamma(\Hom(E, F))$ be a vector bundle map such that $\phi\circ s=0$. Suppose that $x\in M$ is $s(x)=0$ such that
\begin{equation*}
T_xM\stackrel{D^{v}(s)_x}{\longrightarrow} E_x\stackrel{\phi_x}{\longrightarrow} F_x
\end{equation*}
is exact. Then $s^{-1}(0)$ is locally a manifold around $x$ of dimension $\mathrm{dim}\ker(D^{v}(s)_x)$.
\end{pro}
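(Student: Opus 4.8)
The assertion is local around $x$, so the plan is to reduce it to the implicit function theorem after choosing suitable local models. First I would take a chart $U$ around $x$, with coordinates in $\R^{n}$ where $n=\dimension M$, together with trivializations $E|_{U}\cong U\times E_{x}$ and $F|_{U}\cong U\times F_{x}$. Under these, the section $s$ becomes a smooth map $\sigma\colon U\to E_{x}$ with $\sigma(x)=0$, and $\phi$ becomes a smooth family $u\mapsto\phi_{u}\in\Hom(E_{x},F_{x})$ satisfying $\phi_{u}\bigl(\sigma(u)\bigr)=0$ for all $u\in U$. Since $\sigma(x)=0$, the vertical derivative $D^{v}(s)_{x}$ is canonically identified with the ordinary differential $d\sigma_{x}\colon\R^{n}\to E_{x}$, independently of the chosen trivialization and connection; in particular $\Ker\bigl(D^{v}(s)_{x}\bigr)=\Ker(d\sigma_{x})$.

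Next I would use exactness. Put $L:=\Ker\phi_{x}$. Exactness of the sequence $T_{x}M\to E_{x}\to F_{x}$ in the statement gives $\Img(d\sigma_{x})=L$: the inclusion $\Img(d\sigma_{x})\subseteq L$ is the ``complex'' part (equivalent to $\phi_{x}\circ D^{v}(s)_{x}=0$), while $\Ker\phi_{x}\subseteq\Img(d\sigma_{x})$ is the genuine exactness. Choose a linear splitting $E_{x}=L\oplus C$ and write $\sigma=(\sigma_{L},\sigma_{C})$ accordingly. Then $\Img(d\sigma_{x})=L$ says precisely that $d(\sigma_{C})_{x}=0$ and that $d(\sigma_{L})_{x}\colon\R^{n}\to L$ is surjective, i.e.\ $\sigma_{L}$ is a submersion at $x$. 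By the implicit function theorem, after shrinking $U$ the zero set $N:=\sigma_{L}^{-1}(0)$ is an embedded submanifold of $U$ of dimension $n-\dimension L=\dimension\Ker(d\sigma_{x})=\dimension\Ker\bigl(D^{v}(s)_{x}\bigr)$, which is the asserted dimension. Since $s^{-1}(0)=\sigma^{-1}(0)\subseteq\sigma_{L}^{-1}(0)=N$, it now suffices to prove that $s^{-1}(0)=N$ near $x$, i.e.\ that $\sigma_{C}$ vanishes identically on $N$.

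Finally I would bring in the algebraic constraint $\phi\circ s=0$ once more. Since $C\cap\Ker\phi_{x}=0$, the restriction $\phi_{x}|_{C}\colon C\to F_{x}$ is injective, and injectivity of linear maps is an open condition, so $\phi_{u}|_{C}$ remains injective for all $u$ in a neighborhood of $x$. For $u\in N$ we have $\sigma(u)=\sigma_{C}(u)\in C$, and $\phi_{u}\bigl(\sigma_{C}(u)\bigr)=\phi_{u}\bigl(\sigma(u)\bigr)=0$ forces $\sigma_{C}(u)\in\Ker\phi_{u}\cap C=\{0\}$ once $u$ is close enough to $x$. Hence $\sigma_{C}|_{N}=0$, so $s^{-1}(0)$ coincides with $N$ near $x$ and the proof is complete. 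The point where care is needed — the ``hard part'' — is that one should not try to apply the constant-rank theorem to $s$ itself, since $s$ need not have locally constant rank: instead one peels off the genuinely transverse component $\sigma_{L}$ using exactness, reducing to a submersion, and then kills the remaining component $\sigma_{C}$ using $\phi\circ s=0$ together with the injectivity of $\phi_{x}$ on a complement of $\Ker\phi_{x}$.
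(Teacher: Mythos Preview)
Your argument is correct. The paper itself does not prove this proposition: it is quoted verbatim from \cite{CSS} and used as a black box, so there is no ``paper's own proof'' to compare against. Your local reduction---trivialize, split $E_{x}=L\oplus C$ along $L=\Ker\phi_{x}$, use exactness to see that $\sigma_{L}$ is a submersion while $d(\sigma_{C})_{x}=0$, and then use the openness of injectivity of $\phi_{u}|_{C}$ together with $\phi\circ s=0$ to kill $\sigma_{C}$ on $N=\sigma_{L}^{-1}(0)$---is exactly the standard proof one finds in \cite{CSS}. The only remark worth adding is that the inclusion $\Img(d\sigma_{x})\subseteq\Ker\phi_{x}$, which you call the ``complex'' part, is indeed automatic from $\phi\circ s=0$ by differentiating $\phi_{u}(\sigma(u))=0$ at $u=x$ and using $\sigma(x)=0$; you implicitly use this but it is harmless to state it.
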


Denote the group whose elements are inner automorphisms of a Lie algebra $\g$ by $\InnAut(\g)$. Then its Lie algebra is the Lie algebra of inner derivations of $\g$ and denote it by $\InnDer(\g)$. Define an action of $\InnAut(\g)$ on $\Hom(\g, \g)$ by
\begin{equation*}
\cdot: \InnAut(\g)\times\Hom(\g, \g)\lon\Hom(\g, \g), \quad A\cdot f=AfA^{-1},
\end{equation*}for all $A\in\InnAut(\g), f\in\Hom(\g, \g).$
Assume that $R$ is a modified $r$-matrix on a Lie algebra $\g$, then the orbit $\Orb_R=\{A\cdot R| A\in\InnAut(\g) \}$ of $R$ is a manifold. Define a map $\mu_R: \InnAut(\g)\lon \Hom(\g, \g)$ by $\mu_R(A)=A\cdot R$. Then $T_R\Orb_R$ is $D(\mu_R)_{e_G}(\InnDer(\g))$, where $D(\mu_R)_{e_G}$ is the tangent map of $\mu_R$ at $e_G$.

\begin{pro}\label{protang}
With the above notations, $T_R\Orb_R$ is $B^{2}(R)$.
\end{pro}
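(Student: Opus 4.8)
The plan is to compute the differential $D(\mu_R)_{e_G}\colon \InnDer(\g)\to \Hom(\g,\g)$ explicitly and to recognize its image as $B^{2}(R)$. Here I will use that $\InnDer(\g)=\{\ad_y\mid y\in\g\}$ is the Lie algebra of $\InnAut(\g)$, that $C^{1}(R)=\g$, and that the coboundary of a $1$-cochain is $\dM_{\CE}^{R}(y)(x)=[R(x),y]_\g-R([x,y]_\g)$ for $y\in\g$; in particular $B^{2}(R)=\dM_{\CE}^{R}(\g)$.

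First I would fix $y\in\g$ and choose a smooth curve $A_t$ in $\InnAut(\g)$ with $A_0=\Id$ and $\frac{d}{dt}\big|_{t=0}A_t=\ad_y$, for instance the one-parameter subgroup $A_t=\exp(t\,\ad_y)$. Differentiating $\mu_R(A_t)=A_t R A_t^{-1}$ at $t=0$ and using $\frac{d}{dt}\big|_{t=0}A_t^{-1}=-\ad_y$ yields $D(\mu_R)_{e_G}(\ad_y)=\ad_y\circ R-R\circ\ad_y$.

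Next I would evaluate this operator on an arbitrary $x\in\g$: $(\ad_y\circ R-R\circ\ad_y)(x)=[y,R(x)]_\g-R([y,x]_\g)=-\big([R(x),y]_\g-R([x,y]_\g)\big)=-\big(\dM_{\CE}^{R}(y)\big)(x)$, so that $D(\mu_R)_{e_G}(\ad_y)=-\dM_{\CE}^{R}(y)$. Since $y\mapsto\ad_y$ maps $\g$ onto $\InnDer(\g)$, it follows that $T_R\Orb_R=D(\mu_R)_{e_G}(\InnDer(\g))=\{-\dM_{\CE}^{R}(y)\mid y\in\g\}=\dM_{\CE}^{R}(\g)=B^{2}(R)$; this identification is consistent on $\ker(y\mapsto\ad_y)$, since a central element $y$ also satisfies $\dM_{\CE}^{R}(y)=0$. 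There is no substantial obstacle in this argument; the only point deserving a little care is making the derivative-of-conjugation computation on the Lie group $\InnAut(\g)$ precise, in particular getting the sign right and using the correct identification of its Lie algebra with $\InnDer(\g)$.
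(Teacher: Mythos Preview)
Your proof is correct and follows essentially the same route as the paper: you differentiate the conjugation action along the one-parameter subgroup $\exp(t\,\ad_y)$, obtain $\ad_y\circ R-R\circ\ad_y=-\dM_{\CE}^{R}(y)$, and conclude that the image is $B^{2}(R)$. The paper's argument is identical up to notation; your extra remark about central elements is a small consistency check the paper omits, but it does not change the approach.
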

\begin{proof}
Since $T_R\Orb_R=D(\mu_R)_{e_G}(\InnDer(\g))$, for any $v\in T_R\Orb_R$, there exists $x\in\g$ such that
\begin{eqnarray*}
v&=&\frac{d}{dt}|_{t=0}\exp(t\ad_x)\cdot R\\
&=&\frac{d}{dt}|_{t=0}(\exp(t\ad_x)R\exp(-t\ad_x))\\
&=&\ad_xR-R\ad_x\\
&=&-\dM^{R}_{\CE}x.
\end{eqnarray*}
Thus we have $T_R\Orb_R=B^{2}(R)$.
\end{proof}

\begin{thm}
Let $R$ be a modified $r$-matrix on a Lie algebra $(\g, [\cdot, \cdot]_\g)$. If $H^{2}(R)=0$, then there exists an open neighborhood $U\subset \Hom(\g, \g)$ of $R$ and a smooth map $p: U\lon \InnAut(\g)$ such that $p(R')\cdot R=R'$ for every modified $r$-matrix $R'\in U$.
\end{thm}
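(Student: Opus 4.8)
The plan is to exhibit the modified $r$-matrices near $R$ as the zero set of an equivariant section of a vector bundle and then apply Proposition \ref{proorbR}. Take the trivial vector bundle $E=\Hom(\g,\g)\times\Hom(\wedge^{2}\g,\g)$ over $M=\Hom(\g,\g)$, and define the section $s\in\Gamma(E)$ by $s(R')=\Courant{R',R'}-2\pi$, where $\pi=[\cdot,\cdot]_\g$. By Theorem \ref{cormdfi}, the zero set $z(s)$ is precisely the set of modified $r$-matrices on $\g$, and $s(R)=0$ by hypothesis. Let $G=\InnAut(\g)$ act on $M$ by conjugation, $A\cdot f=AfA^{-1}$, and on the fibre $\Hom(\wedge^{2}\g,\g)$ by $(A\cdot\phi)(x,y)=A\phi(A^{-1}x,A^{-1}y)$; this action is smooth and preserves the zero-section $Z\colon M\lon E$.

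First I would check that $s$ is equivariant. Since every $A\in\InnAut(\g)$ is a Lie algebra automorphism, $A\cdot\pi=\pi$; and because $\Courant{\cdot,\cdot}$ is built only out of the bracket $[\cdot,\cdot]_\g$ and compositions, it is natural under conjugation, $\Courant{A\cdot f,A\cdot g}=A\cdot\Courant{f,g}$ for all $f,g$. Hence $s(A\cdot R')=\Courant{A\cdot R',A\cdot R'}-2\pi=A\cdot\Courant{R',R'}-A\cdot(2\pi)=A\cdot s(R')$, so $s$ is equivariant.

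Next I would verify that $R$ is a non-degenerate zero of $s$, that is, that $\InnDer(\g)\stackrel{D(\mu_R)_{e_G}}{\longrightarrow}T_RM\stackrel{D^{v}(s)_R}{\longrightarrow}E_R$ is exact at $T_RM$. On the one hand, the image of $D(\mu_R)_{e_G}$ is $T_R\Orb_R=B^{2}(R)$ by Proposition \ref{protang}. On the other hand, since $s$ is quadratic, its vertical derivative at $R$ is $D^{v}(s)_R(R')=\frac{d}{dt}|_{t=0}\big(\Courant{R+tR',R+tR'}-2\pi\big)=2\Courant{R,R'}$; and $\Courant{R,R'}=-\dM^{R}_{\CE}(R')$ for $R'\in\Hom(\g,\g)$ by the last Proposition of Section \ref{sec:alg}, so $\ker D^{v}(s)_R=\ker\big(\dM^{R}_{\CE}|_{\Hom(\g,\g)}\big)=Z^{2}(R)$. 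Therefore exactness at $T_RM$ is exactly the equality $B^{2}(R)=Z^{2}(R)$, which holds because $H^{2}(R)=0$.

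Finally, Proposition \ref{proorbR} applied to the equivariant section $s$ and its non-degenerate zero $R$ produces an open neighborhood $U$ of $R$ in $M=\Hom(\g,\g)$ and a smooth map $p\colon U\lon G=\InnAut(\g)$ such that $p(R')\cdot R=R'$ for every $R'\in U$ with $s(R')=0$, i.e. for every modified $r$-matrix $R'\in U$; this is the desired conclusion. The only genuinely non-routine step is the identification of the exactness condition at $T_RM$ with $H^{2}(R)=0$, which as above rests on Proposition \ref{protang} together with the computation of the vertical derivative of $s$; the remaining points — smoothness of the $\InnAut(\g)$-action, that $\Orb_R$ is a manifold, and the equivariance of $s$ — are either granted by the discussion preceding Proposition \ref{protang} or immediate from the naturality of $\Courant{\cdot,\cdot}$.
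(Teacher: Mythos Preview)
Your proof is correct and follows essentially the same approach as the paper: the same trivial bundle, the same $\InnAut(\g)$-action, and the same application of Proposition \ref{proorbR} after identifying the non-degeneracy condition with $H^{2}(R)=0$ via Proposition \ref{protang}. The only cosmetic difference is that you write the section as $s(R')=\Courant{R',R'}-2\pi$ and invoke the naturality of $\Courant{\cdot,\cdot}$ and the last Proposition of Section \ref{sec:alg}, whereas the paper writes the section via the explicit map $S(f)(x,y)=[f(x),f(y)]_\g-f([f(x),y]_\g+[x,f(y)]_\g)+[x,y]_\g$ (so your $s$ is $-2S$) and checks equivariance and the vertical derivative by direct computation.
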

\begin{proof}
Denote by $M=\Hom(\g, \g)$ and $E=\Hom(\g, \g)\times\Hom(\wedge^{2}\g, \g)$. Then $E$ is a trivial vector bundle over $M$ with fiber $\Hom(\wedge^{2}\g, \g)$. Define an action of $\InnAut(\g)$ on the manifold $E$ by
\begin{equation*}
\cdot: \InnAut(\g)\times E\lon E, \quad A\cdot(f, \alpha)=(AfA^{-1}, A\alpha\circ A^{-1}),
\end{equation*}
for $(f, \alpha)\in E,~ A\in\InnAut(\g),$ where $A\alpha\circ A^{-1}(x, y)=A\alpha(A^{-1}x, A^{-1}y)$ for any $x, y\in\g$. Define a section $s: M\lon E$ by
\begin{equation*}
s(f)=(f, S(f)), \quad \forall f\in M,
\end{equation*}
where $S: \Hom(\g, \g)\lon\Hom(\wedge^{2}\g, \g)$ is given by
\begin{equation*}
S(f)(x, y)=[f(x), f(y)]_\g-f([f(x), y]_\g+[x, f(y)]_\g)+[x, y]_\g,
\end{equation*}for all $f\in\Hom(\g, \g), ~x, y\in\g.$
Then for any $A\in \InnAut(\g), f\in M$ and $x, y\in\g$, we have
\begin{eqnarray*}
&&AS(f)\circ A^{-1}(x, y)\\
&=&A([f(A^{-1}x), f(A^{-1}y)]_\g-f([f(A^{-1}x), A^{-1}y]_\g+[A^{-1}x, f(A^{-1}y)]_\g)+[A^{-1}x, A^{-1}y]_\g)\\
&=&[Af(A^{-1}x), Af(A^{-1}y)]_\g-AfA^{-1}([Af(A^{-1}x), y]_\g+[x, Af(A^{-1}y)]_\g)+[x, y]_\g\\
&=&S(AfA^{-1})(x, y).
\end{eqnarray*}
Thus we have
$$
A\cdot s(f)=(AfA^{-1}, AS(f)\circ A^{-1})=s(A\cdot f),
$$
which implies that $s$ is an equivariant section.

Since $R$ is a modified $r$-matrix on the Lie algebra $\g$, it follows that $R\in z(s)$. Moreover, since $E$ is a trivial vector bundle, we have $D^{v}(s)_R=D(S)_R: T_RM\lon E_R$. For any $g\in\Hom(\g, \g), x, y\in\g$,
\begin{eqnarray}\label{defiDS}
&&D(S)_R(g)(x, y)\\
\nonumber&=&\frac{d}{dt}|_{t=0}S(R+tg)(x, y)\\
\nonumber&=&\frac{d}{dt}|_{t=0}\Big([R(x)+tg(x), R(y)+tg(y)]_\g\\\nonumber&&-(R+tg)([R(x)+tg(x), y]_\g+[x, R(y)+tg(y)]_\g)+[x, y]_\g\Big)\\
\nonumber&=&[g(x), R(y)]_\g+[R(x), g(y)]_\g-R([g(x), y]_\g+[x, g(y)]_\g)-g([R(x), y]_\g+[x, R(y)]_\g)\\
\nonumber&=&\dM_{\CE}^{R}(g)(x, y).
\end{eqnarray}

By Proposition \ref{protang} and $H^{2}(R)=0$, we have that $R$ is a non-degenerate zero of $s$. By Proposition \ref{proorbR}, there exists an open neighborhood $U\subset \Hom(\g, \g)$ of $R$ and a smooth map $p: U\lon \InnAut(\g)$ such that $p(R)\cdot R=R'$ for every modified $r$-matrix $R'\in U$.
\end{proof}

\begin{thm}
Let $R$ be a modified $r$-matrix on a Lie algebra $(\g, [\cdot, \cdot]_\g)$. If $H^{3}(R)=0$, then the space of modified matrices on the Lie algebra $\g$ is a manifold in a neighborhood of $R$, whose  dimension is $\mathrm{dim}Z^{2}(R)$.
\end{thm}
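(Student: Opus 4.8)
The plan is to invoke Proposition \ref{proorbR1} of Crainic, Schatz and Struchiner, keeping the base $M=\Hom(\g,\g)$, the trivial bundle $E=\Hom(\g,\g)\times\Hom(\wedge^{2}\g,\g)$ (with fiber $\Hom(\wedge^{2}\g,\g)=C^{3}(R)$) and the section $s$ with $s(f)=(f,S(f))$ from the proof of the previous theorem, so that $z(s)=s^{-1}(0)$ is precisely the space of modified $r$-matrices on $\g$ and $R\in s^{-1}(0)$. Since $E$ is trivial, \eqref{defiDS} identifies the vertical derivative: $D^{v}(s)_R=D(S)_R=\dM^{R}_{\CE}\colon C^{2}(R)\to C^{3}(R)$. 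Hence $\Img D^{v}(s)_R=B^{3}(R)$ and $\ker D^{v}(s)_R=Z^{2}(R)$.

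The extra data needed is a bundle $F$ over $M$ and a morphism $\phi$ with $\phi\circ s=0$. I take $F=\Hom(\g,\g)\times\Hom(\wedge^{3}\g,\g)$, the trivial bundle with fiber $\Hom(\wedge^{3}\g,\g)=C^{4}(R)$, and define $\phi\in\Gamma(\Hom(E,F))$ fiberwise by $\phi_f(\alpha)=\Courant{f,\alpha}$ for $\alpha\in\Hom(\wedge^{2}\g,\g)$; bilinearity of $\Courant{\cdot,\cdot}$ makes $\phi$ a genuine vector bundle morphism. Writing $\pi=[\cdot,\cdot]_\g$, the explicit formula for $\Courant{f,f}$ from the proof of Theorem \ref{cormdfi} gives $S(f)=\pi-\tfrac12\Courant{f,f}$, so
\[
\phi_f(S(f))=\Courant{f,\pi}-\tfrac12\Courant{f,\Courant{f,f}}.
\]
Here $\Courant{f,\pi}=0$ by the computation in the proof of Proposition \ref{propd}, which shows $\Courant{\pi,g}=0$ for every $g$ using only the Jacobi identity of $\g$; and $\Courant{f,\Courant{f,f}}=0$ by the graded Jacobi identity in the graded Lie algebra $(C^{*}(\g),\Courant{\cdot,\cdot})$ of Theorem \ref{cormdfi}, since $f$ lies in odd degree $1$ and any odd element $a$ of a graded Lie algebra satisfies $\Courant{a,\Courant{a,a}}=0$. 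Thus $\phi\circ s=0$.

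It remains to check that
\[
T_R M \stackrel{D^{v}(s)_R}{\longrightarrow} E_R \stackrel{\phi_R}{\longrightarrow} F_R
\]
is exact at the middle term. By the identity $\dM^{R}_{\CE}(\alpha)=(-1)^{n-1}\Courant{R,\alpha}$ for $\alpha\in\Hom(\wedge^{n-1}\g,\g)$ proved at the end of Section \ref{sec:alg}, we have $\phi_R(\alpha)=\Courant{R,\alpha}=\dM^{R}_{\CE}(\alpha)$ on $C^{3}(R)$, hence $\ker\phi_R=Z^{3}(R)$. Combined with $\Img D^{v}(s)_R=B^{3}(R)$, the displayed sequence is exact at $E_R$ precisely when $B^{3}(R)=Z^{3}(R)$, i.e. when $H^{3}(R)=0$ — our hypothesis. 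Proposition \ref{proorbR1} then says that $s^{-1}(0)$, the space of modified $r$-matrices on $\g$, is a manifold in a neighborhood of $R$, of dimension $\dim\ker D^{v}(s)_R=\dim Z^{2}(R)$.

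Every calculation involved ($\phi$ is a smooth bundle morphism, $S(f)=\pi-\tfrac12\Courant{f,f}$, $\Courant{\pi,g}=0$, $D(S)_R=\dM^{R}_{\CE}$) is either bilinearity bookkeeping or already carried out in the excerpt. The one delicate point is matching the degree conventions so that $\phi_R$ is literally the Chevalley--Eilenberg differential $\dM^{R}_{\CE}\colon C^{3}(R)\to C^{4}(R)$, hence $\ker\phi_R=Z^{3}(R)$: this is exactly what singles out $H^{3}(R)$ — rather than $H^{2}(R)$, which governed the rigidity statement — as the relevant obstruction, and a sign or shift error there would break the argument.
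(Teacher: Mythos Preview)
Your proof is correct and follows essentially the same approach as the paper: the same trivial bundles $E$ and $F$, the same section $s$ (rewritten via $S(f)=\pi-\tfrac12\Courant{f,f}$), the same bundle map $\phi_f(\alpha)=\Courant{f,\alpha}$, and the same application of Proposition~\ref{proorbR1}. If anything, you give slightly more detail than the paper does, explicitly justifying $\Courant{f,\Courant{f,f}}=0$ via the graded Jacobi identity and checking the sign so that $\phi_R=\dM^{R}_{\CE}$ on $C^{3}(R)$.
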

\begin{proof}
Denote by $M=\Hom(\g, \g), E=\Hom(\g, \g)\times\Hom(\wedge^{2}\g, \g)$ and $F=\Hom(\g, \g)\times\Hom(\wedge^{3}\g, \g)$. Then $E$ and $F$ are trivial vector bundles over $M$ with fiber $\Hom(\wedge^{2}\g, \g)$ and $\Hom(\wedge^{3}\g, \g)$ respectively. Define a smooth map $\phi: E\lon F$ by
\begin{equation*}
\phi(f, \alpha)=(f, \Courant{f, \alpha}), \quad \forall f\in M, \alpha\in\Hom(\wedge^{2}\g, \g).
\end{equation*}
Thus $\phi$ is a vector bundle map.

Moreover, denote the Lie bracket $[\cdot, \cdot]_\g$ by $\pi$, define $s(f)=\pi-\frac{1}{2}\Courant{f, f}$. By Proposition \ref{propd}, we know that $\pi$ lies in the center, we have $\phi\circ s(f)=(f, \Courant{f, \pi}-\frac{1}{2}\Courant{f, \Courant{f, f}})=(f, 0)$, which implies $\phi\circ s=0$. Moreover, denote $\phi_R: E_R\lon F_R$ by $\phi(R, \cdot)$, then $\phi_R=\dM_{\CE}^{R}$. By \eqref{defiDS} and $H^{3}(R)=0$, we have that
$$
T_RM\stackrel{D^{v}(s)_R}{\longrightarrow} E_R\stackrel{\phi_R}{\longrightarrow} F_R
$$
is exact. By Proposition \ref{proorbR1}, we obtain that the space of modified $r$-matrices on the Lie algebra $\g$ is a manifold in a neighborhood of $R$, whose dimension is $\mathrm{dim}Z^{2}(R)$.
\end{proof}

At the end of this section, we give the sufficient condition on a 2-cocycle to give a geometric deformation. Recall that the necessary condition is given in Proposition \ref{pro:K} using the Kuranishi map.

\begin{cor}
With the above notations, if $H^{3}(R)=0$, then any $f\in Z^{2}(R)$ gives rise to a geometric deformation of $R$.
\end{cor}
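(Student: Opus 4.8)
The plan is to produce, for any given $f \in Z^2(R)$, an explicit analytic family $R_t$ with $R_0 = R$, $\dot R_0 = f$, and $[R_t, R_t] = 2\pi$ for all small $t$. I would build $R_t$ as a formal power series $R_t = R + tf + \sum_{k\geq 2} t^k g_k$ with $g_k \in \Hom(\g,\g)$, solve the Maurer–Cartan-type equation $\Courant{R_t,R_t} = 2\pi$ order by order in $t$, and then invoke the obstruction-vanishing hypothesis $H^3(R) = 0$ to kill all obstructions; convergence is obtained because everything lives in finite-dimensional spaces and the recursion is polynomial, so one may reparametrize to get an actual smooth (indeed analytic) curve, or alternatively apply Kuranishi/Maurer–Cartan formalism for the differential graded Lie algebra $(C^*(\g),\Courant{\cdot,\cdot},\dM_R)$ directly.

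Concretely, expanding $\Courant{R_t,R_t} - 2\pi = 0$ and using $\Courant{R,R} = 2\pi$ together with $\dM_R = \Courant{R,\cdot}$, the coefficient of $t^n$ (for $n \geq 1$) reads
\begin{equation*}
2\,\dM_{\CE}^{R}(g_n) \;=\; -\sum_{\substack{i+j=n\\ i,j\geq 1}} \Courant{g_i,g_j},
\end{equation*}
with the convention $g_1 = f$ (here I have used $\dM_{\CE}^R = \pm\Courant{R,\cdot}$ from the last proposition of Section~\ref{sec:alg}, absorbing signs). For $n = 1$ the right-hand side is empty and the equation is satisfied because $f \in Z^2(R)$. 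Proceeding inductively, suppose $g_1,\dots,g_{n-1}$ have been constructed so that all equations up to order $n-1$ hold. Let $\Theta_n := -\sum_{i+j=n}\Courant{g_i,g_j} \in \Hom(\wedge^2\g,\g) = C^3(R)$. The key point is that $\Theta_n$ is $\dM_{\CE}^R$-closed: this follows from the graded Jacobi identity in $(C^*(\g),\Courant{\cdot,\cdot})$ applied to the partial sums, exactly as in the classical deformation-theory obstruction argument, using $\dM_R^2 = 0$ and the lower-order relations. Since $H^3(R) = 0$, every closed element of $C^3(R)$ is exact, so there exists $g_n \in C^2(R)$ with $2\,\dM_{\CE}^R(g_n) = \Theta_n$, completing the induction.

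The main obstacle is the convergence/smoothness issue: the above produces only a \emph{formal} family. To upgrade to a genuine geometric deformation I would argue that the Maurer–Cartan equation $\dM_R R' + \tfrac12\Courant{R',R'} = 0$ defines an analytic subvariety of the finite-dimensional space $\Hom(\g,\g)$, and that the hypothesis $H^3(R) = 0$ forces this variety to be smooth near $0$ with tangent space $Z^2(R)$ — this is precisely the content of the preceding theorem, which identifies the space of modified $r$-matrices near $R$ with a manifold of dimension $\dim Z^2(R)$ whose tangent space at $R$ is $Z^2(R)$ (via Proposition~\ref{proorbR1} and the identification $D^v(s)_R = \dM_{\CE}^R$). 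Given a tangent vector $f \in Z^2(R) = T_R(\text{space of modified }r\text{-matrices})$, smoothness of this manifold yields an honest smooth curve $R_t$ inside it through $R$ with velocity $f$ at $t=0$, which is the desired geometric deformation. So in the write-up I would phrase the proof as: by the previous theorem the modified $r$-matrices near $R$ form a smooth manifold with $T_R = Z^2(R)$; hence every $f \in Z^2(R)$ is realized as $\dot R_0$ for some smooth path in that manifold, i.e. a geometric deformation of $R$.
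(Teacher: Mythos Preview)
Your final paragraph is exactly the paper's argument: the preceding theorem makes the locus of modified $r$-matrices near $R$ a smooth manifold $W$ of dimension $\dim Z^{2}(R)$, one then identifies $T_{R}W = Z^{2}(R)$ (the paper does this by combining Proposition~\ref{protangentgamma}, which gives $T_{R}W \subseteq Z^{2}(R)$, with the dimension count; your route via $\ker D^{v}(s)_{R} = \ker \dM_{\CE}^{R}$ is equivalent), and hence every $f \in Z^{2}(R)$ is the velocity of a smooth curve in $W$. The formal power-series construction you outline first is correct as far as it goes but is an unnecessary detour: you yourself note that it only yields a formal solution and that the actual smooth curve must come from the manifold structure, so in a clean write-up you can drop that part entirely and go straight to the manifold argument.
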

\begin{proof}
Since $R$ is a modified $r$-matrix and $H^{3}(R)=0$, we have that the space $W$ of modified $r$-matrices on the Lie algebra $\g$ is a manifold in a neighborhood of $R$, whose dimension is $\mathrm{dim}Z^{2}(R)$. Assume $\gamma(t)\in W$, by Proposition \ref{protangentgamma}, we have $\dot{\gamma}(0)\in Z^{2}(R)$. Moreover, $\mathrm{dim}W=\mathrm{dim}Z^{2}(R)$, then $T_RW=Z^{2}(R)$. Thus any $f\in Z^{2}(R)$ gives rise to a geometric deformation of $R$.
\end{proof}

\section{Linear deformations of modified $r$-matrices}\label{sec:lin}

In this section, we study linear deformations of a modified $r$-matrix using the established cohomology theory. In particular, a trivial linear deformation leads to a Nijenhuis element for a modified $r$-matrix $R$.

\begin{defi}\label{cohoR}
Let $R$ be a modified $r$-matrix on the Lie algebra $(\g, [\cdot,\cdot]_\g)$ and $\hat{R}:\g\lon\g$ be a linear map. If there exists a positive number $\epsilon\in\mathbb{R}$ such that $R_t=R+t\hat{R}$ is still a modified $r$-matrix on the Lie algebra $(\g, [\cdot,\cdot]_\g)$ for all $t\in (-\epsilon, \epsilon)$, we say that $\hat{R}$ generates a {\bf linear deformation} of the modified $r$-matrix $R$.
\end{defi}

\begin{defi}
Let $R:\g\lon\g$ be a modified r-matrix on $\g$. Two linear deformations $R_t^{1}=R+t\hat{R}_1$ and $R_t^{2}=R+t\hat{R}_2$ are said to be {\bf equivalent} if there exists an $x\in\g$ such that
$$\varphi_t=\Id_\g+t\ad_x,$$
satisfies the following conditions:
\begin{itemize}
\item[\rm(i)] $\varphi_{t}([y, z]_\g)=[\varphi_{t}(y),\varphi_{t}(z)]_\g,\quad \forall y,z\in\g,$
\item[\rm(ii)] $R^{2}_{t}\circ\varphi_{t}=\varphi_{t}\circ R^{1}_{t}$.
\end{itemize}
\end{defi}

\begin{thm}
Let $\hat{R}:\g\lon\g$ generate a   linear deformation of the modified $r$-matrix $R$. Then  $\hat{R}$ is a $2$-cocycle.

 Let $R^{1}_{t}$ and $R^{2}_{t}$ be equivalent linear deformations of $R$ generated by $\hat{R_1}$ and $\hat{R_2}$ respectively. Then $[\hat{R_1}]=[\hat{R_2}]$ in $H^{2}(R)$.
\end{thm}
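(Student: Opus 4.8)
For the first statement the plan is to substitute $R_t=R+t\hat R$ into the modified classical Yang--Baxter equation \eqref{assmd} and collect coefficients of powers of $t$. The coefficient of $t^{0}$ is just \eqref{assmd} for $R$ and gives nothing; the coefficient of $t^{2}$ says that $\hat R$ is a Rota--Baxter operator of weight $0$, which we will not use; and since $R_t$ satisfies \eqref{assmd} for every $t$ in an open interval, the coefficient of $t^{1}$ must vanish, i.e.
\[
[R(x),\hat R(y)]_\g+[\hat R(x),R(y)]_\g=\hat R([R(x),y]_\g+[x,R(y)]_\g)+R([\hat R(x),y]_\g+[x,\hat R(y)]_\g)
\]
for all $x,y\in\g$. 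First I would reorganise this, using the skew-symmetry of $[\cdot,\cdot]_\g$, into the form
\[
[R(x),\hat R(y)]_\g-R([x,\hat R(y)]_\g)-[R(y),\hat R(x)]_\g+R([y,\hat R(x)]_\g)=\hat R([R(x),y]_\g+[x,R(y)]_\g),
\]
which is precisely the closedness condition \eqref{eqclosed} for $f=\hat R$; since $C^{2}(R)=\Hom(\g,\g)$, this proves $\hat R\in Z^{2}(R)$. An equivalent, slightly slicker route uses the differential graded Lie algebra of Section~\ref{sec:alg}: $R+t\hat R$ is a modified $r$-matrix iff $t\,\dM_{R}\hat R+\tfrac12 t^{2}\Courant{\hat R,\hat R}=0$ for all small $t$, whence $\dM_{R}\hat R=0$, and then $\dM_{\CE}^{R}\hat R=-\Courant{R,\hat R}=-\dM_{R}\hat R=0$.

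For the second statement the plan is to use condition (ii) of the definition of equivalence, $R^{2}_{t}\circ\varphi_{t}=\varphi_{t}\circ R^{1}_{t}$ with $\varphi_t=\Id_\g+t\ad_x$ for some fixed $x\in\g$. Expanding both composites as polynomials in $t$ and comparing the coefficients of $t^{1}$ yields
\[
R\circ\ad_x+\hat R_2=\ad_x\circ R+\hat R_1,\qquad\text{i.e.}\qquad \hat R_1-\hat R_2=R\circ\ad_x-\ad_x\circ R.
\]
It then remains to identify the right-hand side with a coboundary. For $x\in\g=C^{1}(R)$ the operator \eqref{defor-cob} reduces, by \eqref{defi-rep}, to $\dM_{\CE}^{R}x(y)=\rho(y)x=[R(y),x]_\g-R([y,x]_\g)=R([x,y]_\g)-[x,R(y)]_\g=(R\circ\ad_x-\ad_x\circ R)(y)$, exactly the computation already carried out in the proof of Proposition~\ref{protang}. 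Hence $\hat R_1-\hat R_2=\dM_{\CE}^{R}x\in B^{2}(R)$ and therefore $[\hat R_1]=[\hat R_2]$ in $H^{2}(R)$.

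Both halves are in essence bookkeeping once the Taylor expansions in $t$ are written out, and I do not expect a genuine obstacle. The only delicate point is the first part: matching the $t^{1}$-coefficient of \eqref{assmd} with the exact form of the cocycle condition \eqref{eqclosed} requires reorganising four bracket terms and tracking signs through the skew-symmetry of $[\cdot,\cdot]_\g$. I note that condition (i) of the equivalence (that each $\varphi_t$ be a Lie algebra homomorphism) is not needed here; only (ii) is used.
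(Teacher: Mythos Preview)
Your proposal is correct and follows essentially the same approach as the paper: substitute $R_t=R+t\hat R$ into \eqref{assmd}, read off the $t^{1}$-coefficient to get the cocycle condition \eqref{eqclosed}, and for the second part expand condition (ii) of the equivalence in $t$ to obtain $\hat R_1-\hat R_2=\dM_{\CE}^{R}x$. Your additional remarks (the alternative DGLA argument and the observation that condition (i) is not used) are valid side comments but do not change the substance of the argument.
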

\begin{proof}

Since $R_t=R+t\hat{R}$ is a modified $r$-matrix on the Lie algebra $(\g, [\cdot,\cdot]_\g)$, we have
\begin{equation*}
[R_t(x),R_t(y)]_\g=R_t([R_t(x), y]_\g+[x, R_t(y)]_\g)-[x, y]_{\g},\quad \forall x,y\in\g.
\end{equation*}
Consider the coefficients of $t$ and $t^2$ respectively, we have
\begin{eqnarray}\label{defi-defor2}
&&[\hat{R}(x),R(y)]_\g+[R(x),\hat{R}(y)]_\g\\
\nonumber&=&R([\hat{R}(x),y]_\g+[x,\hat{R}(y)]_\g)+\hat{R}([R(x),y]_\g+[x,R(y)]_\g),\quad \forall x,y\in\g,
\end{eqnarray}
and
\begin{eqnarray}\label{defi-Defor}
[\hat{R}(x), \hat{R}(y)]_\g=\hat{R}([\hat{R}(x), y]_\g+[x, \hat{R}(y)]_\g).
\end{eqnarray}
 By \eqref{defi-defor2}, we deduce  that $\hat{R}$ is a 2-cocycle  of the modified $r$-matrix $R$.

If $R^{1}_{t}$ and $R^{2}_{t}$ are equivalent linear deformations of $R$, then there exists $x\in \g$ such that
\begin{equation*}
(\Id_\g+t\ad_x)(R+t\hat{R}_1)(u)=(R+t\hat{R}_2)(\Id_\g+t\ad_x)(u), \quad \forall u\in\g,
\end{equation*}
which implies
\begin{equation}
\label{equa-defor}\hat{R}_1(u)-\hat{R}_2(u)=[R(u),x]_\g-R([u,x]_\g),\quad \forall u\in\g.
\end{equation}

By \eqref{equa-defor}, we have
\begin{equation*}
\hat{R_{1}}-\hat{R_{2}}=\dM_{\CE}^{R}x,
\end{equation*}
where $\dM_{\CE}^{R}$ is given by \eqref{defor-cob}. Thus $[\hat{R_1}]=[\hat{R_2}]$ in $H^{2}(R)$.
\end{proof}

\begin{defi}
 A linear deformation of a modified $r$-matrix $R$ generated by   $\hat{R}$   is  {\bf trivial}   if there exists an $x\in\g$ such that $\Id+t\ad_x$ is an isomorphism from $R_{t}=R+t\hat{R}$ to $R$.
\end{defi}

\begin{defi}
Let $R$ be a modified $r$-matrix on a Lie algebra $(\g, [\cdot, \cdot]_\g)$. An element $x\in\g$ is called a {\bf Nijenhuis element} associated to $R$ if $x$ satisfies
\begin{eqnarray}
\label{eq-Nij1}[[x, y]_\g, [x, z]_\g]_\g&=&0,\\
\label{eq-Nij2}[x, [x, R(y)]_\g]_\g&=&[x, R([x, y]_\g)]_\g,
\end{eqnarray}
for all $y, z\in\g$.
\end{defi}

Let $\hat{R}$ generate a trivial linear deformation of a modified $r$-matrix $R$ on a Lie algebra $(\g, [\cdot,\cdot]_\g)$. Then there exists $x\in\g$ such that
\begin{eqnarray*}
(\Id+t\ad_x)[y, z]_\g&=&[y+t[x, y]_\g, z+t[x, z]_\g]_\g,\\
 R(y+t[x, y]_\g)&=&(\Id+t\ad_x)(R(y)+t\hat{R}(y)),
\end{eqnarray*}
 for all $y, z\in\g. $ Therefore, we have $$[[x, y]_\g, [x, z]_\g]_\g=0,\quad [x, \hat{R}(y)]_\g=0,\quad R([x, y]_\g)=[x, R(y)]_\g+\hat{R}(y).$$ Thus a trivial linear deformation gives rise to a Nijenhuis element.

\begin{thm}
Let $R$ be a modified $r$-matrix on a Lie algebra $\g$. Then for any Nijenhuis element $x\in\g$, $R_t=R+t\dM_{\CE}^{R}x$ is a trivial linear deformation of the modified $r$-matrix $R$.
\end{thm}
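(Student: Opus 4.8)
The plan is to show that $\varphi_t:=\Id+t\ad_x$ is, for $|t|$ small enough, a Lie algebra automorphism of $\g$ that intertwines $R_t=R+t\dM_{\CE}^R x$ with $R$; triviality of the linear deformation then follows directly from the definition. First recall from \eqref{defi-rep}--\eqref{defor-cob} that the coboundary of the $1$-cochain $x\in\g$ is $\dM_{\CE}^R x(y)=[R(y),x]_\g-R([y,x]_\g)=R([x,y]_\g)-[x,R(y)]_\g$ for all $y\in\g$, so indeed $R_t=R+t\dM_{\CE}^R x$. To see that $\varphi_t$ is an algebra homomorphism I would expand $[\varphi_t(y),\varphi_t(z)]_\g=[y,z]_\g+t\big([[x,y]_\g,z]_\g+[y,[x,z]_\g]_\g\big)+t^2[[x,y]_\g,[x,z]_\g]_\g$; the Jacobi identity rewrites the coefficient of $t$ as $[x,[y,z]_\g]_\g$, and the Nijenhuis condition \eqref{eq-Nij1} kills the coefficient of $t^2$, so $[\varphi_t(y),\varphi_t(z)]_\g=\varphi_t([y,z]_\g)$ for all $t$. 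Since $\Id+t\ad_x$ is invertible whenever $-1/t$ is not an eigenvalue of $\ad_x$, it is an automorphism of $(\g,[\cdot,\cdot]_\g)$ for all $t$ in some interval $(-\epsilon,\epsilon)$.

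Next I would verify the intertwining identity $\varphi_t\circ R_t=R\circ\varphi_t$. Expanding both sides gives $\varphi_t\circ R_t(y)=R(y)+t\big([x,R(y)]_\g+\dM_{\CE}^R x(y)\big)+t^2[x,\dM_{\CE}^R x(y)]_\g$ and $R\circ\varphi_t(y)=R(y)+tR([x,y]_\g)$. The coefficients of $t$ coincide because $[x,R(y)]_\g+\dM_{\CE}^R x(y)=R([x,y]_\g)$ by the formula for $\dM_{\CE}^R x$ above, and the coefficient of $t^2$ is $[x,\dM_{\CE}^R x(y)]_\g=[x,R([x,y]_\g)]_\g-[x,[x,R(y)]_\g]_\g$, which vanishes precisely by the Nijenhuis condition \eqref{eq-Nij2}. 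Hence $R_t=\varphi_t^{-1}\circ R\circ\varphi_t$ on $(-\epsilon,\epsilon)$; since conjugating a solution of \eqref{assmd} by a Lie algebra automorphism again yields a solution of \eqref{assmd} (a one-line check using that $\varphi_t$ and $\varphi_t^{-1}$ preserve $[\cdot,\cdot]_\g$), $R_t$ is a modified $r$-matrix for every $t\in(-\epsilon,\epsilon)$. Thus $\dM_{\CE}^R x$ generates a linear deformation of $R$, and by construction $\varphi_t=\Id+t\ad_x$ is an isomorphism from $R_t$ to $R$, so this linear deformation is trivial.

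The computations are entirely elementary, so there is no real obstacle; the only point needing care is the bookkeeping of powers of $t$, namely making sure that the quadratic term of the homomorphism identity is exactly \eqref{eq-Nij1} and the quadratic term of the intertwining identity is exactly \eqref{eq-Nij2}, so that both defining conditions of a Nijenhuis element get used and no spurious extra hypothesis is needed.
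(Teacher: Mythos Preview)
Your proof is correct and takes a cleaner route than the paper. The paper first establishes that $R_t$ is a linear deformation by a lengthy direct verification of \eqref{defi-Defor}, i.e.\ that $\hat R=\dM_{\CE}^{R}x$ satisfies $[\hat R(y),\hat R(z)]_\g=\hat R([\hat R(y),z]_\g+[y,\hat R(z)]_\g)$, expanding this out and repeatedly invoking \eqref{assmd}, \eqref{eq-Nij1}, \eqref{eq-Nij2}; only afterwards does it note the homomorphism and intertwining identities to conclude triviality. You instead prove the homomorphism and intertwining identities first (each a two-line check using exactly one of the Nijenhuis conditions), then observe that $R_t=\varphi_t^{-1}R\varphi_t$ is automatically a modified $r$-matrix because conjugation by a Lie algebra automorphism preserves solutions of \eqref{assmd}. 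This is a genuine simplification: it bypasses the page of bracket manipulations entirely and makes transparent why each of \eqref{eq-Nij1} and \eqref{eq-Nij2} is needed. The only minor cost is that you must check (or remark) that conjugation preserves \eqref{assmd}, but as you say this is immediate.
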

\begin{proof}
Denote by $\hat{R}=\dM_{\CE}^{R}x$.  To show that $R_t$ is a linear deformation of $R$,  it suffices to show that \eqref{defi-defor2} and \eqref{defi-Defor} hold. Note that \eqref{defi-defor2} means that $\hat{R}$ is closed, which holds naturally since now $\hat{R}=\dM_{\CE}^{R}x$ is exact. Thus, we need to verify that Equation \eqref{defi-Defor} holds. 
For any $y, z\in\g$, by \eqref{defor-cob}, we obtain $\hat{R}(y)=[R(y), x]_\g-R([y, x]_\g)$.
Moreover, by \eqref{assmd}, \eqref{eq-Nij1} and \eqref{eq-Nij2}, it follows that
\begin{eqnarray}
\nonumber&&[R([y, x]_\g), R([z, x]_\g)]_\g\\
\nonumber&\stackrel{\eqref{assmd},\eqref{eq-Nij1}}{=}&R\Big([R([y, x]_\g), [z, x]_\g]_\g+[[y, x]_\g, R([z, x]_\g)]_\g\Big)\\
\nonumber&=&R\Big([R([y, x]_\g), [z, x]_\g]\Big)+R\Big([[y, x]_\g, R([z, x]_\g)]_\g\Big)\\
\nonumber&=&R\Big([[R([y, x]_\g), z]_\g, x]_\g\Big)+R\Big([z, [R([y, x]_\g), x]_\g]\Big)\\\nonumber&&+R\Big([[y, R([z, x]_\g)]_\g, x]_\g\Big)+R\Big([y, [x, R([z, x]_\g)]_\g]_\g\Big)\\
\nonumber&\stackrel{\eqref{eq-Nij1},\eqref{eq-Nij2}}{=}&R\Big([[R([y, x]_\g), z]_\g, x]_\g\Big)+R\Big([x, [z, [x, R(y)]_\g]_\g]\Big)\\\nonumber &&+R\Big([[y, R([z, x]_\g)]_\g, x]_\g\Big)-R\Big([x, [y, [x, R(z)]_\g]_\g]_\g\Big),
\end{eqnarray}
\begin{eqnarray*}
&&-[[R(y), x]_\g, R([z, x]_\g)]_\g\\
&=&-[[R(y), R([z, x]_\g)]_\g, x]_\g-[R(y), [x, R([z, x]_\g)]_\g]_\g\\
&\stackrel{\eqref{assmd}}{=}&-[R([R(y), [z, x]_\g]_\g), x]_\g-[R([y, R([z, x]_\g)]_\g), x]_\g\\\nonumber &&+[[y, [z, x]_\g]_\g, x]_\g-[R(y), [x, R([z, x]_\g)]_\g]_\g\\
&\stackrel{\eqref{eq-Nij1},\eqref{eq-Nij2}}{=}&-[x, [x, R([R(y), z]_\g)]_\g]_\g-[R([z, [R(y), x]_\g]_\g), x]_\g-[R([y, R([z, x]_\g)]_\g), x]_\g\\\nonumber &&+[[y, [z, x]_\g]_\g, x]_\g+[x, [x, [R(y), R(z)]_\g]_\g]_\g+[[x, [R(y), x]_\g]_\g, R(z)]_\g\\
&\stackrel{\eqref{assmd},\eqref{eq-Nij2}}{=}&[x, [x, R([y, R(z)]_\g)]_\g]_\g-[x, [x, [y, z]_\g]_\g]_\g+[[y, [z, x]_\g]_\g, x]_\g\\
&&-[R([z, [R(y), x]_\g]_\g), x]_\g-[R([y, R([z, x]_\g)]_\g), x]_\g-[[x, R([x, y]_\g]_\g), R(z)]_\g
\end{eqnarray*}
and
\begin{eqnarray*}
&&-[R([y, x]_\g), [R(z), x]_\g]_\g\\
&\stackrel{\eqref{assmd}}{=}&[R([[R(z), y]_\g, x]_\g), x]_\g+[R([y, [R(z), x]_\g]_\g), x]_\g+[R([z, R([y, x]_\g)]_\g), x]_\g\\
&&-[[z, [y, x]_\g]_\g, x]_\g+[R(z), [x, R([y, x]_\g)]_\g]_\g.
\end{eqnarray*}

By \eqref{eq-Nij1} and above equations, we have
\begin{eqnarray*}
&&[\hat{R}(y), \hat{R}(z)]_\g-\hat{R}([\hat{R}(y), z]_\g+[y, \hat{R}(z)]_\g)\\
&=&[[R(y), x]_\g-R([y, x]_\g), [R(z), x]_\g-R([z, x]_\g)]_\g\\
&&-[R([[R(y), x]_\g-R([y, x]_\g),z]), x]_\g+R([[[R(y), x]_\g-R([y, x]_\g), z]_\g, x]_\g)\\
&&-[R([y, [R(z), x]_\g-R([z, x]_\g)]_\g), x]_\g+R([[y, [R(z), x]_\g-R([z, x]_\g)]_\g, x]_\g)\\
&=&-[[R(y), x]_\g, R([z, x]_\g)]_\g+[R([y, x]_\g), R([z, x]_\g)]_\g-[R([y, x]_\g), [R(z), x]_\g]_\g\\
&&-[R([[R(y), x]_\g-R([y, x]_\g),z]), x]_\g+R([[[R(y), x]_\g-R([y, x]_\g), z]_\g, x]_\g)\\
&&-[R([y, [R(z), x]_\g-R([z, x]_\g)]_\g), x]_\g+R([[y, [R(z), x]_\g-R([z, x]_\g)]_\g, x]_\g)\\
&=&0.
\end{eqnarray*}
Thus $R_t=R+t\dM_{\CE}^{R}x$ is a linear deformation of the modified $r$-matrix $R$. Since $x\in\g$ is a Nijenhuis element, we have $(\Id+t\ad_x)[y, z]_\g=[y+t[x, y]_\g, z+t[x, z]_\g]_\g$ and $R\circ(\Id+t\ad_x)=(\Id+t\ad_x)\circ(R+t\dM_{\CE}^{R}x)$. Thus for any Nijenhuis element $x\in\g$, $R_t=R+t\dM_{\CE}^{R}x$ is a trivial linear deformation of the modified $r$-matrix $R$.
\end{proof}

At the end of this section, we consider the relation between linear deformations of modified $r$-matrices and  linear deformations of the induced Lie algebras. Recall that a skew-symmetric bilinear map $\omega: \wedge^{2}\g\lon\g$ generates a linear deformation of a Lie algebra $(\g, [\cdot,\cdot]_{\g})$ if $[\cdot, \cdot]_t=[\cdot, \cdot]_\g+t\omega$ defines a Lie algebra structure on $\g$ for all $t\in (-\epsilon, \epsilon)$.

\begin{pro}
Let $\hat{R}$ generate a linear deformation of a modified $r$-matrix $R$ on a Lie algebra $(\g, [\cdot,\cdot]_\g)$. Then $\omega$ defined by
\begin{equation*}
\omega(x, y)=[\hat{R}(x),y]_\g+[x,\hat{R}(y)]_\g,\quad \forall x,y\in\g,
\end{equation*}
generates a linear deformation of the Lie algebra $(\g,[\cdot,\cdot]_R)$ given by the modified $r$-matrix $R$, which is exactly the one associated to the linear deformation of the modified $r$-matrix $R$.
\end{pro}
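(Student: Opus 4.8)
The plan is to reduce everything to one elementary observation: the deformed bracket $[\cdot,\cdot]_R+t\omega$ is literally the Semenov-Tian-Shansky bracket attached to the deformed modified $r$-matrix $R_t=R+t\hat{R}$.

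First I would record, for all $x,y\in\g$ and all $t\in\mathbb{R}$, the identity
\begin{equation*}
[x,y]_{R_t}=[R_t(x),y]_\g+[x,R_t(y)]_\g=[x,y]_R+t\big([\hat{R}(x),y]_\g+[x,\hat{R}(y)]_\g\big)=[x,y]_R+t\,\omega(x,y),
\end{equation*}
which is immediate from \eqref{defi-lie}, the definition $R_t=R+t\hat{R}$, and the definition of $\omega$. In particular $\omega$ is skew-symmetric, being a sum of $[\cdot,\cdot]_\g$-brackets, so $[\cdot,\cdot]_R+t\omega$ is a skew-symmetric bilinear operation on $\g$ for every $t$.

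Next, since $\hat{R}$ generates a linear deformation of $R$, Definition \ref{cohoR} furnishes an $\epsilon>0$ such that $R_t$ is a modified $r$-matrix for every $t\in(-\epsilon,\epsilon)$; by the result of Semenov-Tian-Shansky recalled around \eqref{defi-lie} (equivalently, by the fact that each $R_t$ satisfies \eqref{assmd}), $(\g,[\cdot,\cdot]_{R_t})$ is then a Lie algebra for each such $t$. Combining this with the displayed identity shows that $(\g,[\cdot,\cdot]_R+t\omega)$ satisfies the Jacobi identity on the same interval $(-\epsilon,\epsilon)$, i.e. $\omega$ generates a linear deformation of the Lie algebra $(\g,[\cdot,\cdot]_R)$. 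The final clause of the statement is then tautological: the deformed Lie algebra $(\g,[\cdot,\cdot]_R+t\omega)$ coincides bracket-by-bracket with the Lie algebra $(\g,[\cdot,\cdot]_{R_t})$ induced by the deformed modified $r$-matrix $R_t$, so it is exactly the one associated to the linear deformation of $R$.

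The main point — and there is essentially no obstacle beyond it — is spotting the identity $[\cdot,\cdot]_{R_t}=[\cdot,\cdot]_R+t\omega$; once it is in hand, the Jacobi identity for the deformed bracket is inherited for free from the modified classical Yang-Baxter equation satisfied by each $R_t$, with no direct computation of Jacobiators and no cocycle condition on $\omega$ to verify. The only care needed is to carry the parameter interval $(-\epsilon,\epsilon)$ over verbatim from the given linear deformation of $R$.
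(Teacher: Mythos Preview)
Your proposal is correct and matches the paper's own argument essentially verbatim: the paper also just writes down the identity $[x,y]_{R_t}=[x,y]_R+t\omega(x,y)$ and then invokes the fact that $[\cdot,\cdot]_{R_t}$ is a Lie bracket for each $t$. Your version is slightly more explicit about skew-symmetry and about carrying over the interval $(-\epsilon,\epsilon)$, but there is no substantive difference.
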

\begin{proof}
It is obvious that $$[x, y]_{R_t}=[R(x), y]_\g+[x, R(y)]_\g+t([\hat{R}(x),y]_\g+[x,\hat{R}(y)]_\g)=[x, y]_R+t\omega(x, y).$$ Since $[\cdot, \cdot]_{R_t}$ are Lie algebra structures, we have that $\omega$ generates a linear deformation of the Lie algebra $(\g,[\cdot,\cdot]_R)$ given by the modified $r$-matrix $R$.
\end{proof}
The notion of a Nijenhuis operator on a Lie algebra $(\g, [\cdot, \cdot]_\g)$ was given in \cite{Dor}, which gives rise to a trivial linear deformation of the Lie algebra $(\g, [\cdot, \cdot]_\g)$.

\begin{defi}\rm(\cite{Dor})
Let $(\g, [\cdot, \cdot]_\g)$ be a Lie algebra. A linear map $N:\g\lon\g$ is called {\bf Nijenhuis operator} if
\begin{equation*}
[N(x), N(y)]_\g=N([N(x), y]_\g+[x, N(y)]_\g)-N^{2}([x, y]_\g), \quad \forall x, y\in\g.
\end{equation*}
\end{defi}

\begin{thm}
Let $x\in\g$ be a Nijenhuis element associated to a modified matrix $R$. Then $\ad_x$ is a Nijenhuis operator on the Lie algebra $(\g, [\cdot, \cdot]_R)$.
\end{thm}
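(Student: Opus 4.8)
The plan is to show directly that the Nijenhuis torsion of $N:=\ad_x$ with respect to the bracket $[\cdot,\cdot]_R$ of \eqref{defi-lie} vanishes, i.e. that
\[
\mathcal{N}_R(N)(u,v):=[Nu,Nv]_R-N\big([Nu,v]_R+[u,Nv]_R\big)+N^2[u,v]_R=0
\]
for all $u,v\in\g$. The first move is to recast the two defining conditions of a Nijenhuis element in operator form: \eqref{eq-Nij1} says $[Nu,Nv]_\g=0$ for all $u,v\in\g$, and \eqref{eq-Nij2} says $N^2\circ R=N\circ R\circ N$. A crucial preliminary consequence of \eqref{eq-Nij1} is that $N^2$ is a derivation of $[\cdot,\cdot]_\g$: applying the Jacobi identity twice to $[x,[x,[u,v]_\g]_\g]_\g$ produces a cross term $2[Nu,Nv]_\g$ which drops out by \eqref{eq-Nij1}, leaving $N^2[u,v]_\g=[N^2u,v]_\g+[u,N^2v]_\g$. (Recall that $N=\ad_x$ is itself a derivation of $[\cdot,\cdot]_\g$ by the Jacobi identity.)

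The main step is then a bookkeeping expansion. I would expand each of the three groups of terms in $\mathcal{N}_R(N)(u,v)$ using $[a,b]_R=[R a,b]_\g+[a,R b]_\g$ together with the two derivation properties just recorded (that of $N$ and that of $N^2$). After expansion, four pairs of terms of the shapes $[RNu,Nv]_\g$, $[Nu,RNv]_\g$, $[N^2u,Rv]_\g$ and $[Ru,N^2v]_\g$ cancel outright, and the surviving expression collapses to
\[
[N^2Ru-NRNu,\,v]_\g+[u,\,N^2Rv-NRNv]_\g-[Nu,NRv]_\g-[NRu,Nv]_\g.
\]
The first two brackets vanish by the operator identity $N^2\circ R=N\circ R\circ N$ coming from \eqref{eq-Nij2}, and the last two vanish by \eqref{eq-Nij1} applied to the pairs $(u,R v)$ and $(R u,v)$. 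This gives $\mathcal{N}_R(N)=0$, which is exactly the Nijenhuis operator condition for $\ad_x$ on $(\g,[\cdot,\cdot]_R)$.

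The only obstacle is the size of the calculation and the attendant risk of sign errors; no deeper idea is needed beyond the two translations above. It is worth pointing out where each hypothesis enters: condition \eqref{eq-Nij1} alone already forces $\ad_x$ to be a Nijenhuis operator on $(\g,[\cdot,\cdot]_\g)$ itself, since a short Jacobi computation shows its torsion there equals $[[x,u]_\g,[x,v]_\g]_\g$; condition \eqref{eq-Nij2} is precisely the extra input needed to cancel the discrepancy $N^2\circ R-N\circ R\circ N$ produced by twisting the bracket by $R$. As a cross-check (and an alternative proof) one may instead invoke the preceding theorem: since $x$ is a Nijenhuis element, $R_t=R+t\dM_{\CE}^R x$ is a trivial linear deformation, so $\Id+t\ad_x$ is a Lie algebra isomorphism from $(\g,[\cdot,\cdot]_{R_t})$ to $(\g,[\cdot,\cdot]_R)$ for all small $t$; since $[\cdot,\cdot]_{R_t}=[\cdot,\cdot]_R+t\big([\ad_x\cdot,\cdot]_R+[\cdot,\ad_x\cdot]_R-\ad_x[\cdot,\cdot]_R\big)$, matching the coefficient of $t^2$ in the isomorphism identity (the $t^0$ and $t^1$ coefficients match automatically) yields exactly $\mathcal{N}_R(\ad_x)=0$.
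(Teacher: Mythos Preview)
Your main argument is correct and is essentially the same direct verification as the paper's, but packaged more cleanly: where the paper expands $[\ad_x y,\ad_x z]_R$ and $\ad_x([\ad_x y,z]_R+[y,\ad_x z]_R)-\ad_x^2[y,z]_R$ term-by-term via repeated Jacobi identities and then matches, you first recast \eqref{eq-Nij1} and \eqref{eq-Nij2} as the operator identities $[Nu,Nv]_\g=0$ and $N^2R=NRN$, extract the lemma that $N^2$ is a $[\cdot,\cdot]_\g$-derivation, and then the torsion collapses in a few lines. The content is the same, but your bookkeeping is more transparent and makes explicit which hypothesis kills which residual term.

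Your cross-check via the preceding theorem is a genuinely different (and shorter) route that the paper does not take. It is valid: from that theorem, $\varphi_t=\Id+t\ad_x$ is a $[\cdot,\cdot]_\g$-automorphism satisfying $R\circ\varphi_t=\varphi_t\circ R_t$, hence a Lie algebra isomorphism $(\g,[\cdot,\cdot]_{R_t})\to(\g,[\cdot,\cdot]_R)$; your identification $\omega=[\ad_x\cdot,\cdot]_R+[\cdot,\ad_x\cdot]_R-\ad_x[\cdot,\cdot]_R$ is exactly $[\hat R(\cdot),\cdot]_\g+[\cdot,\hat R(\cdot)]_\g$ with $\hat R=\dM_{\CE}^R x$, and comparing the $t^2$-coefficients of $\varphi_t[u,v]_{R_t}=[\varphi_t u,\varphi_t v]_R$ gives precisely $\mathcal N_R(\ad_x)=0$. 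This argument has the advantage of explaining \emph{why} the Nijenhuis condition holds (it is the $t^2$-obstruction to triviality), at the cost of relying on the heavier computation already done in the preceding theorem.
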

\begin{proof}
For any $x, y, z\in\g$, by \eqref{eq-Nij1} and \eqref{eq-Nij2}, we have
\begin{eqnarray*}
&&[\ad_xy, \ad_xz]_{R}\\
&=&[R([x, y]_\g), [x, z]_\g]_\g+[[x, y]_\g, R([x, z]_\g)]_\g\\
&=&[[R([x, y]_\g), x]_\g, z]_\g+[x, [R([x, y]_\g),z]_\g]_\g+[[x, R([x, z]_\g)]_\g, y]_\g+[x, [y, R([x, z]_\g)]_\g]_\g\\
&=&-[[x, [x, R(y)]_\g]_\g, z]_\g+[x, [R([x, y]_\g),z]_\g]_\g+[[x, [x, R(z)]_\g]_\g, y]_\g+[x, [y, R([x, z]_\g)]_\g]_\g\\
&=&-[x, [[x, R(y)]_\g, z]_\g]_\g+[x, [R([x, y]_\g),z]_\g]_\g+[x, [[x, R(z)]_\g, y]_\g]_\g+[x, [y, R([x, z]_\g)]_\g]_\g
\end{eqnarray*}
and
\begin{eqnarray*}
&&\ad_x([\ad_xy, z]_R+[y, \ad_xz]_R)-\ad_x^{2}([y, z]_R)\\
&=&[x, [R([x, y]_\g), z]_\g]_\g+[[x, y]_\g, R(z)]_\g+[x, [R(y), [x, z]_\g]_\g]_\g\\
&&[x, [y, R([x, z]_\g)]_\g]_\g-[x, [x, [R(y), z]_\g]_\g]_\g-[x, [x, [y, R(z)]_\g]_\g]_\g.
\end{eqnarray*}
Thus $[\ad_xy, \ad_xz]_{R}=\ad_x([\ad_xy, z]_R+[y, \ad_xz]_R)-\ad_x^{2}([y, z]_R)$, which implies that $\ad_x$ is a Nijenhuis operator on the Lie algebra $(\g, [\cdot, \cdot]_R)$.
\end{proof}

\section{Applications}\label{sec:app}

In this section, we give some applications of the above deformation theories, including deformations of complement of the diagonal Lie algebra $\g_\Delta$ and compatible Poisson structures.

\subsection{Deformations of complements}

Let $(\g, [\cdot,\cdot]_{\g})$ be a Lie algebra, then we have a direct-product Lie algebra structure $[\cdot,\cdot]_{\oplus}$ on $\g\oplus\g$, where
\begin{equation*}
[(x_1,y_1),(x_2,y_2)]_{\oplus}=([x_1,x_2]_{\g},[y_1,y_2]_{\g}), \quad\forall x_i, y_i\in\g, i=1,2.
\end{equation*}
Define the subspace $\g_{\Delta}$ by $\g_{\Delta}=\{(x,x)|\forall x\in\g\}$ and the subspace $\g_{-\Delta}=\{(x,-x)|\forall x\in\g\}$. It is obvious that $\g_\Delta$ is a Lie subalgebra of $\g\oplus\g$, while $\g_{-\Delta}$ is not a Lie subalgebra. To find a complement of $\g_\Delta$ which is also a Lie subalgebra, it is natural to consider the graph of certain linear map from $\g_{-\Delta}$ to $\g_{\Delta}$. It is known that a complement of $\g_\Delta$ is isomorphic to a graph of a linear map from $\g_{-\Delta}$ to $\g_\Delta$. 
Let $R\in\gl(\g)$ be a linear map.  Define a linear map $\hat{R}: \g_{-\Delta}\lon \g_{\Delta}$ by
\begin{equation*}
\hat{R}(x,-x)=(-R(x),-R(x)), \quad \forall x\in\g.
\end{equation*}

\begin{pro}\label{prographR}
With the above notations, the graph  $\huaG(\hat{R}):=\{\hat{R}u+u|u\in\g_{-\Delta}\}$ is a Lie subalgebra of $(\g\oplus\g, [\cdot,\cdot]_{\oplus})$ if and only if $R$ is a modified $r$-matrix.
\end{pro}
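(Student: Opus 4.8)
The plan is to compute the bracket of two elements of the graph $\huaG(\hat R)$ directly and check when the result lies again in $\huaG(\hat R)$. An arbitrary element of $\huaG(\hat R)$ has the form $\hat R(x,-x)+(x,-x)=(x-R(x),\,-x-R(x))$. So first I would take $x,y\in\g$ and compute
\begin{equation*}
\big[(x-R(x),-x-R(x)),\,(y-R(y),-y-R(y))\big]_{\oplus}
=\big([x-R(x),y-R(y)]_\g,\;[x+R(x),y+R(y)]_\g\big),
\end{equation*}
expanding each component into the four terms $[x,y]_\g\pm[x,R(y)]_\g\pm[R(x),y]_\g+[R(x),R(y)]_\g$ (with the appropriate signs). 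The second step is to determine the condition under which this pair lies in $\huaG(\hat R)$: a pair $(a,b)\in\g\oplus\g$ lies in the graph precisely when there is $z\in\g$ with $a=z-R(z)$ and $b=-z-R(z)$, equivalently $a-b=2z$ and $a+b=-2R(z)$, i.e. $a+b=-R(a-b)$. So the membership criterion is simply the single equation
\begin{equation*}
a+b+R(a-b)=0
\end{equation*}
applied to $a=[x-R(x),y-R(y)]_\g$ and $b=[x+R(x),y+R(y)]_\g$.

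The third step is to substitute and simplify. Here $a+b=2[x,y]_\g+2[R(x),R(y)]_\g$ and $a-b=-2[x,R(y)]_\g-2[R(x),y]_\g$, so the membership condition becomes
\begin{equation*}
[x,y]_\g+[R(x),R(y)]_\g-R\big([x,R(y)]_\g+[R(x),y]_\g\big)=0,
\end{equation*}
which is exactly the modified classical Yang-Baxter equation \eqref{assmd}. Thus $\huaG(\hat R)$ is closed under $[\cdot,\cdot]_{\oplus}$ for all $x,y$ if and only if $R$ satisfies \eqref{assmd}, i.e.\ $R$ is a modified $r$-matrix. Since $\huaG(\hat R)$ is visibly a linear subspace, closure under the bracket is the only thing to verify, so this completes both directions.

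\textbf{Expected obstacle.} There is no serious obstacle here; the argument is a direct computation. The only point requiring a little care is getting the graph-membership criterion right (the equation $a+b+R(a-b)=0$ rather than something with the wrong signs or factors of $2$), since $\hat R$ is defined with a sign twist $\hat R(x,-x)=(-R(x),-R(x))$; once that is pinned down, matching the simplified bracket to \eqref{assmd} is immediate and the ``if and only if'' comes for free because each algebraic step is an equivalence.
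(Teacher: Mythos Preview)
Your proof is correct and follows essentially the same approach as the paper: both compute the bracket of two generic graph elements $(x-R(x),-x-R(x))$ and reduce the closure condition to the modified classical Yang-Baxter equation. The only cosmetic difference is that the paper decomposes the result into its $\g_\Delta$- and $\g_{-\Delta}$-components directly, whereas you encode graph membership by the equivalent criterion $a+b+R(a-b)=0$; the computations and conclusion are identical.
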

\begin{proof}
 For all $x,y\in\g$, we have
\begin{eqnarray*}
&&[(-R(x),-R(x))+(x,-x), (-R(y),-R(y))+(y,-y)]_{\oplus}\\
&=&([x,y]_{\g},[x,y]_{\g})+([R(x),R(y)]_{\g},[R(x),R(y)]_{\g})\\
&&+(-[R(x),y]_{\g},[R(x),y]_{\g})+(-[x,R(y)]_{\g},[x,R(y)]_{\g})\\
&=&([x,y]_{\g}+[R(x),R(y)]_{\g},[x,y]_{\g}+[R(x),R(y)]_{\g})\\
&&+(-[R(x),y]_{\g}-[x,R(y)]_{\g}, [R(x),y]_{\g}+[x,R(y)]_{\g}).
\end{eqnarray*}
Thus $\huaG(\hat{R})$ is a Lie subalgebra if and only if
\begin{equation*}
R([R(x),y]_{\g}+[x,R(y)]_{\g})=[x,y]_{\g}+[R(x),R(y)]_{\g},
\end{equation*}
i.e. $R$ is a modified $r$-matrix.
\end{proof}

\begin{pro}
Let $R$ be a modified $r$-matrix. Then $(\g_\Delta, \huaG(\hat{R}))$ is a matched pair of Lie algebras.
\end{pro}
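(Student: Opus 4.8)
The plan is to deduce the statement from the classical dictionary between matched pairs of Lie algebras and decompositions of a Lie algebra into two complementary Lie subalgebras. Recall that if a Lie algebra $(\mathfrak d, [\cdot,\cdot]_{\mathfrak d})$ admits two Lie subalgebras $\g_1,\g_2$ with $\mathfrak d=\g_1\oplus\g_2$ as vector spaces, then, for $\xi\in\g_1$ and $u\in\g_2$, splitting the mixed bracket $[\xi,u]_{\mathfrak d}$ according to this decomposition produces linear maps $\rho\colon\g_1\to\gl(\g_2)$ and $\varrho\colon\g_2\to\gl(\g_1)$, and the axioms of a matched pair for $(\g_1,\g_2,\rho,\varrho)$ are exactly a repackaging of the Jacobi identity of $\mathfrak d$ together with the Jacobi identities of $\g_1$ and $\g_2$. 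Thus it suffices to check that $(\g\oplus\g,[\cdot,\cdot]_\oplus)$ decomposes, as a vector space, into the two Lie subalgebras $\g_\Delta$ and $\huaG(\hat R)$.

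Next I would observe that all three ingredients are already available. The subspace $\g_\Delta$ is a Lie subalgebra of $(\g\oplus\g,[\cdot,\cdot]_\oplus)$, as noted just before Proposition \ref{prographR}. The subspace $\huaG(\hat R)$ is a Lie subalgebra precisely because $R$ is a modified $r$-matrix, which is exactly Proposition \ref{prographR}. Finally, $\huaG(\hat R)$ is by construction the graph of the linear map $\hat R\colon\g_{-\Delta}\to\g_\Delta$, and since $\g\oplus\g=\g_\Delta\oplus\g_{-\Delta}$ as vector spaces (write $(a,b)=\tfrac12(a+b,a+b)+\tfrac12(a-b,b-a)$), the graph of any linear map $\g_{-\Delta}\to\g_\Delta$ is a vector-space complement of $\g_\Delta$; hence $\g\oplus\g=\g_\Delta\oplus\huaG(\hat R)$. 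Concretely, every element of $\huaG(\hat R)$ is of the form $u_w:=(w-R(w),\,-w-R(w))$ for a unique $w\in\g$, and for $(a,b)\in\g\oplus\g$ one has $(a,b)=(z,z)+u_w$ with $w=\tfrac12(a-b)$ and $z=\tfrac12(a+b)+R(\tfrac12(a-b))$, the two summands being uniquely determined. Invoking the dictionary above then finishes the proof.

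If one prefers to display the matched pair structure explicitly instead of quoting the general principle, the only computation needed is to expand, for $(x,x)\in\g_\Delta$ and $u_w\in\huaG(\hat R)$,
\[
[(x,x),\,u_w]_\oplus=\big([x,\,w-R(w)]_\g,\ -[x,\,w+R(w)]_\g\big),
\]
and to split the right-hand side along $\g\oplus\g=\g_\Delta\oplus\huaG(\hat R)$ using the formula from the previous paragraph; this reads off $\rho$ and $\varrho$ in closed form, after which the four compatibility conditions follow by a direct check (or, again, are automatic from the Jacobi identity in $\g\oplus\g$). I do not anticipate any genuine obstacle here: the content of the proposition is essentially a corollary of Proposition \ref{prographR}, the only point deserving a moment's attention being the verification that $\g_\Delta$ and $\huaG(\hat R)$ are genuinely complementary, which is immediate from the graph description.
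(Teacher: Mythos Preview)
Your proposal is correct and follows essentially the same approach as the paper: both argue that $\g\oplus\g=\g_\Delta\oplus\huaG(\hat R)$ as vector spaces with each summand a Lie subalgebra (the latter by Proposition~\ref{prographR}), and then invoke the standard dictionary between such decompositions and matched pairs. Your version is more explicit about the complementarity (via the graph description and the explicit splitting formulas), whereas the paper simply notes $\g_\Delta\cap\huaG(\hat R)=0$ and appeals to dimension.
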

\begin{proof}

It is obvious that $\g\oplus\g=\g_{\Delta}\oplus\huaG(\hat{R})$ since $\g_{\Delta}\bigcap\huaG(\hat{R})=0$.  Then the conclusion follows from the fact that both $\g_{\Delta}$ and $\huaG(\hat{R})$ are Lie subalgebras.
\end{proof}

Summarizing the above studies, we have the following conclusion.

\begin{thm}
  Let $R_t$ be a geometric deformation of a modified $r$-matrix $R$. Then $ \huaG(\hat{R_t})$ is a deformation of the complement $ \huaG(\hat{R})$. Moreover,  $(\g_\Delta, \huaG(\hat{R_t}))$ are  matched pairs of Lie algebras.
\end{thm}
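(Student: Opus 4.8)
The plan is to assemble the statement from pieces already in hand. First I would recall that by definition a geometric deformation $R_t$ is a smooth one-parameter family of modified $r$-matrices with $R_0 = R$, so for each fixed $t$ the map $R_t \in \gl(\g)$ satisfies the modified classical Yang--Baxter equation \eqref{assmd}. Applying Proposition \ref{prographR} to $R_t$ for each $t$ then immediately gives that the graph $\huaG(\hat{R_t}) = \{\hat{R_t}u + u \mid u \in \g_{-\Delta}\}$ is a Lie subalgebra of $(\g\oplus\g, [\cdot,\cdot]_\oplus)$, where $\hat{R_t}(x,-x) = (-R_t(x), -R_t(x))$. Since $t\mapsto R_t$ is smooth and $\hat{R_t}$ depends linearly (hence smoothly) on $R_t$, the family $\huaG(\hat{R_t})$ is a smooth one-parameter family of Lie subalgebras of $\g\oplus\g$, all of them complements of $\g_\Delta$, with $\huaG(\hat{R_0}) = \huaG(\hat{R})$; this is exactly what it means to be a deformation of the complement $\huaG(\hat{R})$.

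For the second assertion, I would invoke the proposition already proved: for any modified $r$-matrix $R'$, the pair $(\g_\Delta, \huaG(\hat{R'}))$ is a matched pair of Lie algebras, because $\g\oplus\g = \g_\Delta \oplus \huaG(\hat{R'})$ as vector spaces (their intersection is zero: an element $(x,x)$ lying in $\huaG(\hat{R'})$ would force $x = -R'(x)$ and $x = R'(x)$, hence $x=0$) and both summands are Lie subalgebras. Applying this with $R' = R_t$ for each $t$ yields that $(\g_\Delta, \huaG(\hat{R_t}))$ is a matched pair of Lie algebras for every $t$ in the deformation interval.

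Honestly there is no real obstacle here: the statement is a corollary whose entire content is "apply Proposition \ref{prographR} and the subsequent matched-pair proposition pointwise in $t$, and note that smoothness of $R_t$ transfers to smoothness of $\huaG(\hat{R_t})$." The only thing worth spelling out carefully is what "deformation of the complement" is meant to mean — presumably a smooth family of Lie-subalgebra complements of $\g_\Delta$ through $\huaG(\hat{R})$ — and checking that the parametrization by graphs of linear maps $\g_{-\Delta}\to\g_\Delta$ makes the family manifestly smooth in $t$. So the proof is essentially two sentences: the subalgebra/matched-pair structure for each $t$ is Proposition \ref{prographR} and its successor, and the smoothness is inherited from that of $R_t$.

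\begin{proof}
Since $R_t$ is a geometric deformation of $R$, for each $t$ the linear map $R_t\in\gl(\g)$ is a modified $r$-matrix with $R_0=R$. By Proposition \ref{prographR}, $\huaG(\hat{R_t})$ is a Lie subalgebra of $(\g\oplus\g,[\cdot,\cdot]_\oplus)$ for each $t$, and it is a complement of $\g_\Delta$. As $t\mapsto R_t$ is smooth and $\hat{R_t}$ depends linearly on $R_t$, the family $\huaG(\hat{R_t})$ is a smooth family of Lie-subalgebra complements of $\g_\Delta$ with $\huaG(\hat{R_0})=\huaG(\hat{R})$; that is, $\huaG(\hat{R_t})$ is a deformation of $\huaG(\hat{R})$. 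Finally, applying the matched pair result above with the modified $r$-matrix $R_t$ shows that $(\g_\Delta,\huaG(\hat{R_t}))$ is a matched pair of Lie algebras for every $t$.
\end{proof}
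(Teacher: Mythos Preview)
Your proposal is correct and matches the paper's approach exactly: the paper does not even write out a formal proof, stating the theorem simply as ``Summarizing the above studies, we have the following conclusion,'' i.e., it is a direct corollary of Proposition~\ref{prographR} and the subsequent matched-pair proposition applied pointwise in $t$. Your write-up is, if anything, more careful than the paper's, since you explicitly address the smoothness of the family $\huaG(\hat{R_t})$.
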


\subsection{Compatible Poisson structures} A {\bf compatible Poisson structure} consists of two Poisson structures $\pi, \pi'$ on a manifold $M$ such that $\pi+\pi'$ is also a Poisson structure on the manifold $M$.

Let $R$ be a modified $r$-matrix on a Lie algebra $\g$. Then $(\g, [\cdot, \cdot]_R)$ is a Lie algebra and we denote by $(\g^*,\{\cdot, \cdot\}_R)$ the corresponding linear Poisson manifold.
\begin{pro}
Let $R$ be a modified $r$-matrix on a Lie algebra $\g$ and $R_t=R+t\hat{R}$ be a linear deformation of $R$. For any $t_1, t_2\in \mathbb{R}$, $\{\cdot, \cdot\}_{R_{t_1}}$ and $\{\cdot, \cdot\}_{R_{t_2}}$ are compatible Poisson structures on $\g^*$.
\end{pro}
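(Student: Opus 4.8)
The plan is to reduce the statement about compatibility of the linear Poisson structures $\{\cdot,\cdot\}_{R_{t_1}}$ and $\{\cdot,\cdot\}_{R_{t_2}}$ on $\g^*$ to a purely Lie-algebraic statement about the brackets $[\cdot,\cdot]_{R_{t_1}}$ and $[\cdot,\cdot]_{R_{t_2}}$ on $\g$. Recall that for a Lie algebra $(\g,[\cdot,\cdot])$ the dual $\g^*$ carries the linear (Lie--Poisson) structure, and that two linear Poisson structures on $\g^*$ coming from Lie brackets $[\cdot,\cdot]_1$ and $[\cdot,\cdot]_2$ on $\g$ are compatible if and only if the sum $[\cdot,\cdot]_1+[\cdot,\cdot]_2$ is again a Lie bracket on $\g$; equivalently, $[\cdot,\cdot]_1$ and $[\cdot,\cdot]_2$ form a \emph{compatible pair} of Lie algebra structures. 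So the first step is to invoke this standard correspondence and reduce to showing that $[\cdot,\cdot]_{R_{t_1}}+[\cdot,\cdot]_{R_{t_2}}$ satisfies the Jacobi identity.

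Next I would use the explicit form of the deformed bracket. From the previous proposition in the excerpt we have $[x,y]_{R_t}=[x,y]_R+t\,\omega(x,y)$ with $\omega(x,y)=[\hat R(x),y]_\g+[x,\hat R(y)]_\g$, and by hypothesis $[\cdot,\cdot]_{R_t}$ is a Lie bracket for all $t$ in an interval around $0$. Hence the Jacobiator $J_t$ of $[\cdot,\cdot]_{R_t}$ vanishes identically in $t$; writing $J_t$ as a polynomial in $t$ (it is quadratic: $J_t=J_0+t\,B+t^2 C$ where $J_0$ is the Jacobiator of $[\cdot,\cdot]_R$, $C$ is the Jacobiator of $\omega$, and $B$ is the mixed ``cross'' term, i.e. the bilinear symmetric pairing measuring compatibility of $[\cdot,\cdot]_R$ and $\omega$) and using that a polynomial vanishing on an interval has all coefficients zero, we conclude $J_0=0$, $B=0$, $C=0$ separately. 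In particular $B=0$ says exactly that $[\cdot,\cdot]_R$ and $\omega$ are a compatible pair, and $C=0$ says $\omega$ alone satisfies Jacobi.

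Finally, I would compute the Jacobiator of $[\cdot,\cdot]_{R_{t_1}}+[\cdot,\cdot]_{R_{t_2}}=[\cdot,\cdot]_R+[\cdot,\cdot]_R+(t_1+t_2)\omega$ — more carefully, of the bracket $[x,y]_{R_{t_1}}+[x,y]_{R_{t_2}}=2[x,y]_R+(t_1+t_2)\omega(x,y)$. By bilinearity of the Jacobiator construction in each slot, this Jacobiator is a linear combination of $J_0$ (the Jacobiator of $[\cdot,\cdot]_R$), the cross term $B$, and $C$ (the Jacobiator of $\omega$), with coefficients that are polynomials in $t_1+t_2$; since each of $J_0$, $B$, $C$ vanishes by the previous step, the whole expression vanishes. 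Hence $[\cdot,\cdot]_{R_{t_1}}+[\cdot,\cdot]_{R_{t_2}}$ is a Lie bracket, so the corresponding linear Poisson structures are compatible. The only mildly delicate point — and the part I expect to require the most care in writing — is bookkeeping the trilinear ``cross terms'': making precise that the Jacobiator is a quadratic form in the deformation parameter and that its mixed coefficient is symmetric and bilinear in the two brackets being added, so that one really can read off compatibility of any two members $R_{t_1}, R_{t_2}$ of the family from the single-parameter hypothesis. This is routine but should be stated cleanly rather than left implicit.
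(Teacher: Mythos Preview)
Your proof is correct, and it shares the same opening reduction as the paper: pass from Poisson structures on $\g^*$ to Lie brackets on $\g$, and show that $[\cdot,\cdot]_{R_{t_1}}+[\cdot,\cdot]_{R_{t_2}}$ satisfies the Jacobi identity. After that point, however, the two arguments diverge.

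You expand the Jacobiator of $[\cdot,\cdot]_{R_t}=[\cdot,\cdot]_R+t\omega$ as a quadratic polynomial $J_0+tB+t^2C$, use vanishing on an interval to kill each coefficient, and then reassemble to conclude that the Jacobiator of $2[\cdot,\cdot]_R+(t_1+t_2)\omega$ is a combination of $J_0,B,C$ and hence zero. The paper instead notices the one-line identity
\[
[x,y]_{R_{t_1}}+[x,y]_{R_{t_2}}=2\Big([R_{(t_1+t_2)/2}(x),y]_\g+[x,R_{(t_1+t_2)/2}(y)]_\g\Big)=2\,[x,y]_{R_{(t_1+t_2)/2}},
\]
so the sum is simply twice the bracket associated to the modified $r$-matrix $R_{(t_1+t_2)/2}$, hence automatically a Lie bracket. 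You actually wrote down this sum, $2[x,y]_R+(t_1+t_2)\omega(x,y)$, but did not recognise it as $2[x,y]_{R_{(t_1+t_2)/2}}$; doing so eliminates the need for the Jacobiator bookkeeping entirely. Your route has the virtue of making explicit that any linear pencil of Lie brackets is automatically a compatible family, while the paper's route is shorter and ties the compatibility directly back to the modified $r$-matrix condition at the midpoint parameter.
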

\begin{proof}
By the fact that $R+\frac{t_1+t_2}{2}\hat{R}$ is also a modified $r$-matrix on the Lie algebra $\g$, we have
\begin{equation*}
[x, y]_{R_{t_1}}+[x, y]_{R_{t_2}}=2([R(x)+\frac{t_1+t_2}{2}\hat{R}(x), y]_\g+[x, R(y)+\frac{t_1+t_2}{2}\hat{R}(y)]),
\end{equation*}
which implies that $[\cdot, \cdot]_{R_{t_1}}+[\cdot, \cdot]_{R_{t_2}}$ is also a Lie bracket on the Lie algebra $\g$ by \eqref{defi-lie}. Thus, for any $t_1, t_2\in \mathbb{R}$, $\{\cdot, \cdot\}_{R_{t_1}}$ and $\{\cdot, \cdot\}_{R_{t_2}}$ are compatible Poisson structures on $\g^*$.
\end{proof}

\vspace{2mm}
\noindent
{\bf Acknowledgements. } We give warmest thanks to Jianghua Lu and Chenchang Zhu for helpful comments. This research is supported by NSFC (11922110).

On behalf of all authors, the corresponding author states that there is no conflict of interest.

 \end{document}